\title{%
  The Complexity of Satisfiability for Fragments of Hybrid Logic --- Part I%
  \thanks{Supported in part by the grants DFG VO 630/6-1, DFG SCHW 678/4-1, BC-ARC 1323, DAAD-ARC D/08/08881.}
}
\author{Arne Meier${}^1$, Martin Mundhenk${}^2$, Thomas Schneider${}^3$,\newline
  Michael Thomas${}^1$, Volker Weber${}^4$, Felix Weiss${}^2$}
\institute{%
  ${}^1$Theoretical Computer Science, University of Hannover, Germany           \\
  \protect\url{{meier,thomas}@thi.uni-hannover.de}                    \smallskip\\
  ${}^2$Institut f\"ur Informatik, Universit\"at Jena, Germany                  \\
  \protect\url{{martin.mundhenk,felix.weiss}@uni-jena.de}             \smallskip\\
  ${}^3$Computer Science, University of Manchester, UK                          \\
  \protect\url{schneider@cs.man.ac.uk}                                \smallskip\\
  ${}^4$Fakult\"at f\"ur Informatik, Technische Universit\"at Dortmund, Germany
}
\begin{document}
    
  \maketitle

  \begin{abstract}
    The satisfiability problem of hybrid logics with the downarrow binder is known to be undecidable. This initiated a research program on decidable and tractable fragments.
    
    In this paper, we investigate the effect of restricting the propositional part of the language on decidability
    and on the complexity of the satisfiability problem over arbitrary, transitive, total frames, and frames based on equivalence relations. We also consider different sets of modal and hybrid operators.
    We trace the border of decidability and give the precise complexity of most fragments, in particular for all fragments including negation. For the monotone fragments, we are able to distinguish the easy from the hard cases, depending on the allowed set of operators. 
    \smallskip
    
    \keywordname\ hybrid logic, satisfiability, decidability, complexity, Post's lattice
  \end{abstract}

  \section{Introduction}
  \label{sect:intro}

    Hybrid logics are well-behaved extensions of modal logic. However, their expressive power often has adverse effects
    on their computational properties: for instance, the satisfiability problem for basic modal logic
    extended with the \dna binder is undecidable \cite{blse95,gor96,ArBM99b}, as opposed to \PSPACE-complete
    for basic modal logic \cite{lad77} and modal logic extended with nominals and the satisfaction 
    operator \at\ \cite{ArBM99b}.

    In order to regain decidability, many restrictions of the hybrid binder language have been considered.
    On the syntax side, it has been shown in \cite{cafr05} that restricting the interactions between
    \dna and universal operators (such as $\wedge$, $\Box$) makes satisfiability decidable again.
    On the semantics side, the satisfiability problem for the \dna language has been investigated over
    different frame classes. It becomes decidable over frames with bounded width \cite{cafr05},
    over transitive and complete frames \cite{mssw05}, and over frames with an equivalence relation
    \cite{musc07}. In the latter case, decidability is not lost if \at\ or the global modality is added 
    to the language \cite{musc07}, which is not the case over transitive frames \cite{mssw05}.
    Furthermore, over linear frames and transitive trees, where \dna on its own is useless, extensions of the \dna
    language have been shown to be decidable, albeit nonelementarily, in \cite{frrisc03,mssw05}. But elementarily decidable fragments over these frame classes have been obtained by bounding the number of state variables \cite{SchwentickW07,Weber07,BozzelliL08}.
    An overview of complexity results for hybrid logics can be found in \cite{sch07b}.

    Our aim is to obtain a more fine-grained distinction between decidable and undecidable
    hybrid logics by restricting the set of Boolean operators allowed in formulae.
    This is interesting in its own right because it will outline sources of ``bad'' behaviour
    (i.e., undecidability) more precisely. Furthermore, it is interesting 
    in view of the relation between modal and description logic (DL).
    Concept satisfiability, the DL-counterpart of modal satisfiability, plays an important role
    because other useful decision problems for DLs are reducible to it.
    For a number of DLs without full Boolean expressivity,
    notably the $\mathcal{EL}$ and DL-Lite families, this problem is tractable \cite{Baa03,BaBL05,CdGL+05},
    and other relevant
    decision problems have lower complexity than for the standard DL $\mathcal{ALC}$,
    the counterpart of the modal logic \textsf{K}.
    In the case of these restricted DLs, there are also fine-grained analyses of
    additional features that increase complexity and those which do not \cite{BaBL05,ACKZ07}.
    Our study can be seen as a general framework which accommodates restrictions of different types---on
    Boolean operators systematically, and also on modal operators and frame classes.
    As one possible application of the obtained results, we will gain insights into the complexity
    of extensions of modal and description logics with hybrid operators---among them the above mentioned restricted DLs.

    For the sake of generality,
    we will systematically replace the usual $\wedge$, $\neg$ with arbitrary, not necessarily complete, sets of Boolean
    operators. All such possible sets are captured
    in Post's lattice \cite{pos41,bocrrevo03}, which consists of all clones, i.\,e., all closed sets of Boolean
    functions. Each clone corresponds to a set of Boolean operators closed under
    nesting, and vice versa. The lattice allows for transferring upper and lower
    complexity bounds between clones. It will thus be possible to prove
    finitely many results that will be valid for an infinite number of sets of
    operators---and hence for infinitely many satisfiability problems.
    This technique has been used for analysing the complexity of
    satisfiability for propositional logic \cite{lew79} and modal logic \cite{bhss05b},
    satisfiability and model checking for linear temporal logic \cite{bsssv07,bamuscscscvo07},
    and satisfiability of constraint satisfaction problems \cite{sch78,sch07a}.

    Using Post's lattice, we will investigate the complexity of the satisfiability problem
    for hybrid logics containing the modal operators $\Diamond,\Box$ and
    the following hybrid features: nominals, the satisfaction operator \at\ and
    the hybrid binder \dna. We will consider subsets of these operators,
    as well as the above described systematic restrictions to the Boolean operators allowed.
    We will carry out this analysis over four different frame classes: all frames, transitive frames,
    total frames (where every state has at least one successor), and frames with equivalence relations (ER frames).
    The work presented here is part of ongoing work that also includes acyclic frame classes such as transitive trees and linear structures.

    While our analysis is complete with respect to the sets of Boolean operators covered,
    it is far from complete for sets of modal and hybrid operators, as well as for frame classes.
    This is because the latter ``dimensions'' of expressivity are much more difficult to systematise.
    Therefore, we are currently restricting ourselves to the most prominent sets of modal/hybrid operators
    and frame classes. It should also be noted that a fourth dimension is possible, namely allowing
    for multiple accessibility relations in models, i.e., multiple modalities of each kind.
    We have omitted this consideration from the present paper mostly for the sake of a clearer presentation.
    However, we believe that many of the upper bounds can be straightforwardly extended to the multi-modal
    case---and will therefore indeed be helpful to gain insights into the behaviour of more expressive
    description logics.

    This paper contains the most complete subset of our results obtained so far (see Figure \ref{fig:hlsat}),
    namely the following. We will show that, over each of the four above frame classes,
    satisfiability is as hard as in the full Boolean case whenever the negation of the implication
    or self-dual Boolean operators are allowed. (A Boolean function is self-dual if
    negating all of its arguments will always negate its value.) This means that, in these cases,
    satisfiability remains undecidable over arbitrary frames, total frames and, if the \at-operator is present, over transitive frames;
    and \NEXPTIME-complete over transitive frames without \at\ and over ER frames. These results can be found in Section \ref{subsubsec:all}.

\begin{figure}[tb]
        \centering
        \includegraphics[width=\textwidth]{full_lattice.1} 
        \caption{\label{fig:hlsat}%
                Post's Lattice. 
                The complexity of $\SaT[\Fclass{F}](O,B)$ for frame classes 
                $\Fclass{F}\in\{\mathsf{all}, \mathsf{trans}, \mathsf{total}, \mathsf{ER}\}$ 
                and sets $O$ of Boolean operators with 
                $\{\Diamond,\dna\}\subseteq O\subseteq\{\Diamond,\Box,\dna,\at\}$.%
        }
\end{figure}

		In Section \ref{subsubsec:negation}, we also completely classify the complexity of fragments including only negation and the Boolean constants. We obtain completeness for \LOGSPACE if the \at-operator is included and for $\ACp{0}{2}$ otherwise.
		
		For all monotone fragments including the Boolean constant $\false$, we obtain a duality between easy cases, which are all included in \NC1, and hard cases, for which we obtain lower bounds ranging from \LOGSPACE to \PSPACE (Section \ref{subsec:monotone}). Satisfiability for fragments not including $\false$, but possibly all $\true$-reproducing functions, 
turns out to be trivial as shown in Section \ref{subsec:r1}.
		              
 For the fragments that are based on the binary \textsf{xor} operator, the complexity is open. This case has
    turned out to be difficult to handle in \cite{bhss05b,bsssv07,bamuscscscvo07}. 
A list of still open questions can be found in Section \ref{sec:conclusions}.

  \section{Preliminaries}
  \label{sect:prelim}

  \ourparagraph{Boolean Functions and Clones}
    We can identify an $n$-ary propositional operator (connector) $c$ with the $n$-ary \emph{Boolean function} $f_c \colon \{\false,\true\}^n\to \{\false,\true\}$ defined by $f_c(a_1,\ldots,a_n)=\true$ if and only if $c(x_1,\ldots,x_n)$ becomes true when assigning $a_i$ to $x_i$ for all $1 \leq i \leq n$. The Boolean values \emph{false} and \emph{true} correspond to constants, i.\,e., nullary functions, and will be denoted by $\false$ and $\true$.

    A set of Boolean functions is called a \emph{clone} if it contains all projections and is closed under arbitrary composition \cite[Chapter~1]{pip97b}. The set of all Boolean clones forms a lattice, which has been completely classified by Post \cite{pos41}.
    For a set $B$ of Boolean functions, we denote by $[B]$ the smallest clone containing $B$ and call $B$ a \emph{base} for $[B]$.
    Whenever we use $B$ for a set, we assume that $B$ is finite.

    In order to introduce the clones relevant to this paper, we define the following notions
    for $n$-ary Boolean functions $f$:
    \begin{itemize} \itemsep 0pt
      \item $f$ is \emph{$t$-reproducing} if $f(t, \ldots , t) = t$, $t \in \{\false,\true\}$.
      \item $f$ is \emph{monotone} if $a_1 \leq b_1, \ldots , a_n \leq b_n$ implies $f(a_1, \ldots , a_n) \leq f(b_1, \ldots , b_n)$.
      \item $f$ is \emph{$t$-separating} if there exists an $i \in \{1, \ldots , n\}$ such that $f(a_1, \ldots , a_n) = t$ implies $a_i = t$, $t \in \{\false,\true\}$.
      \item $f$ is \emph{self-dual} if $f \equiv \dual{f}$, where $\dual{f}(x_1, \ldots , x_n) = \neg f(\neg x_1, \ldots , \neg x_n)$.
    \end{itemize}
    The clones relevant to this paper are listed in Table \ref{tab:clones}. 
    The definition of all Boolean clones can be found, e.\,g., in \cite{bocrrevo03}.
    Notice that $[B\cup\{\true\}]=\CloneBF$ if and only if  $[B] \supseteq \CloneS_1$ or $[B] \supseteq \CloneD$.

    \begin{table}[tbh]
      \centering
      \small
      \begin{tabular}{c|l|l}
        Name & Definition & Base \\
        \hline
        $\CloneBF$ & All Boolean functions & $\{\land, \neg\}$ \\
        $\CloneR_1$ & $\true$-reproducing functions & $\{\lor, \limplies \}$ \\
        $\CloneM$ & monotone functions & $\{\lor, \land, \false, \true\}$ \\
        $\CloneS_1$ & $\true$-separating functions & $\{x \land \overline{y}\}$ \\
        $\CloneS_{11}$ & $\CloneS_1 \cap \CloneM$ & $ \{x \land (y \lor z), \false \} $ \\
        $\CloneD$ & self-dual functions & $\{ (x \land \overline{y}) \lor (x \land \overline{z}) \lor (\overline{y} \land \overline{z}) \}$ \\
        $\CloneV$ & constant or $n$-ary OR functions & $\{ \lor, \false,\true \}$ \\
        $\CloneE$ & constant or $n$-ary AND functions  & $\{ \land, \false, \true \}$ \\
        $\CloneE_0$ &            & $\{ \land, 0 \}$ \\
        $\CloneN$ & functions depending on at most one variable  & $\{ \neg,\false,\true\}$ \\
        $\CloneN_2$ &  & $\{ \neg\}$ \\
        $\CloneI$ & constant or identity functions & $\{\id, \false,\true\}$ \\
        $\CloneI_0$ &                 & $\{\id, \false\}$ \\
        $\CloneI_1$ &                 & $\{\id, \true\}$ \\
        $\CloneI_2$ &                 & $\{\id\}$ \\
      \end{tabular}
      \smallskip        
      \caption{\label{tab:clones}Boolean clones relevant to this paper, with definitions and bases.}
    \end{table}

  \ourparagraph{Hybrid Logic}
    In the following, we will introduce the notions and definitions of hybrid logic. The terminology is largely taken from \cite{arblma00}.

    Let $\PROP$ be a countable set of \emph{atomic propositions}, $\NOM$ be a countable set of \emph{nominals}, $\SVAR$ be a countable set of \emph{variables} and $\ATOM = \PROP \cup \NOM \cup \SVAR$. We will stick with the common practice to denote atomic propositions by $p,q,\ldots$, nominals by $i,j,\ldots$, and variables by $x,y,\ldots$.
    We define the language of \emph{hybrid (modal) logic} $\HL$ as the set of well-formed formulae of the form
    \[
      \varphi ::= a \mid  c(\varphi,\ldots,\varphi) \mid \Diamond \varphi \mid \Box \varphi \mid \dna x.\varphi  \mid \at_t \varphi
    \]
    where $a \in \ATOM$, $c$ is a Boolean operator, $x \in \SVAR$ and $t \in \NOM \cup \SVAR$.
    Note that the usual cases $\top$ and $\bot$ are covered by the Boolean constants \true and \false.

    Formulae of $\HL$ are interpreted on \emph{(hybrid) Kripke structures} $K=(W,R,\eta)$, consisting of a set of \emph{states} $W$, a \emph{transition relation} $R\colon W\times W$, and a \emph{labeling function} $\eta\colon \PROP\cup\NOM \to\wp(W)$ that maps $\PROP$ and $\NOM$ to subsets of $W$ with $|\eta(i)| = 1$ for all $i \in \NOM$. In order to evaluate $\dna$-formulae, an assignment $g\colon \SVAR \to W$ is necessary. Given an assignment $g$, a state variable $x$ and a state $w$, an \emph{$x$-variant $g^x_w$ of $g$} is defined by $g^x_w(x)=w$ and $g^x_w(x')=g(x')$ for all $x \neq x'$. For any $a \in \ATOM$, let $[\eta,g](a)=\{g(a)\}$ if $a \in \SVAR$ and $[\eta,g](a)=\eta(a)$, otherwise.
    The satisfaction relation of hybrid formulae is defined by

    \begin{tabbing}
    \hspace*{\parindent}\= $K,g,w \models c(\varphi_1,\ldots,\varphi_n)$ \= iff \= $K,g^x_w,w$ \=  $p \in \eta(w)$, $p \in \PROP \cup \NOM$ \kill
      \> $K,g,w \models a$ \> iff \> $w \in [\eta,g](a)$, $a \in \ATOM$, \\
      \> $K,g,w \models c(\varphi_1,\ldots,\varphi_n)$ \> iff \> $f_c(t_1,\ldots,t_n)=\true$, where $t_i$ is the truth value of \\
      \>                   \>     \> $K,g,w$ \> $\models \varphi_i$, $1 \leq i \leq n$, \\
      \> $K,g,w \models \Diamond \varphi$ \> iff \> $K,g,w'$ \> $\models \varphi$ for some $w' \in W$ with $wRw'$,  \\
      \> $K,g,w \models \Box \varphi$ \> iff \> $K,g,w'$ \> $\models \varphi$ for all $w' \in W$ with $wRw'$, \\
      \> $K,g,w \models \at_t \varphi$ \> iff \> $K,g,w$ \> $\models \varphi$ for $w \in W$ such that $w \in \eta(t)$, \\
      \> $K,g,w \models \dna x. \varphi$ \> iff \> $K,g^x_w,w$ \> $\models \varphi$. 
    \end{tabbing}

    A hybrid formula $\varphi$ is said to be \emph{satisfiable} if there exists a Kripke structure $K=(W,R,\eta)$, a $w \in W$ and an assignment $g\colon \SVAR \to W$ such that $K,g,w \models \varphi$.

    The \emph{at} operator $\at_t$ shifts evaluation to the state named by $t\in \NOM \cup \SVAR$.
    The \emph{downarrow binder} $\dna x.$ binds the state variable $x$ to the current state. 
    The symbols $\at_x$, $\dna x.$  
    are called \emph{hybrid operators} whereas the symbols $\Diamond$ and $\Box$ are called \emph{modal operators}. 

    For considering fragments of hybrid logics, we define subsets of the language $\HL$ as follows.
    Let $B$ be a finite set of Boolean functions and $O$ a set of hybrid and modal operators. We define $\HL(O,B)$ to denote the set of well-formed hybrid formulae using the operators in $O$ and the Boolean connectives in $B$ only.

  \ourparagraph{Properties of Frames}
    A \emph{frame} $F$ is a pair $(W,R)$, where $W$ is a set of states and $R\subseteq W\times W$ a transition relation.
    We will refer to a frame as being \emph{transitive}, \emph{total} or \emph{ER} whenever its transition relation $R$ is transitive
    ($uRv \land vRw \limplies uRw$), total ($\forall u \exists v (uRv)$), or an equivalence relation, i.\,e.,
    reflexive ($uRu$), transitive and symmetric ($uRv \limplies vRu$).
    In this paper we will consider
      the class $\mathsf{all}$ of all frames,
      the class $\mathsf{trans}$ of all transitive frames,
      the class $\mathsf{total}$ of all total frames, and
      the class $\mathsf{ER}$ of all ER frames.

  \ourparagraph{The Satisfiability Problem}
    Let $K=(W,R,\eta)$ be a Kripke structure.
    Say that \emph{$K$ is based on a frame $F$} iff $F$ is the frame underlying $K$, i.\,e., $F=(W,R)$.
    We define the \emph{satisfiability problems} for the fragments of $\HL$ over frame classes defined above as follows.

    \problemdef%
      {$\SaT[\Fclass{F}](O,B)$}
      {an $\HL(O,B)$-formula $\varphi$}
      {is there a Kripke structure $K=(W,R,\eta)$ based on a frame from $\Fclass{F}$, an assignment $g\colon \SVAR \to W$ and a $w\in W$ such that $K,g,w\models\varphi$\,?}
    In case $\Fclass{F}=\mathsf{all}$, we will omit the prefix and simply write $\SaT(O,B)$.

  \ourparagraph{Complexity Theory}
    We assume familiarity with the standard notions of complexity theory as, e.\,g., defined in \cite{pap94}.
    In particular, we will make use of the classes $\L$, $\NL$, $\P$, $\coNP$, $\PSPACE$, $\NEXP$, and $\coRE$.
    
    We will now introduce the notions of circuit complexity required for this paper,
    for more information on circuit complexity the reader is referred to \cite{vol99}.
    The class $\NC{1}$ is defined as the set of languages recognizable by a logtime-uniform Boolean
    circuits of logarithmic depth and polynomial size over $\{\land,\lor,\neg\}$,
    where the fan-in of $\land$ and $\lor$ gates is fixed to $2$.

    The class $\AC0$ is defined as the set of languages recognizable by a logtime-uniform Boolean
    circuits of constant depth and polynomial size over $\{\land,\lor,\neg\}$,
    where the fan-in of gates of the first two types is not bounded. If, in addition, modulo-2 gates are allowed,
    then the corresponding class is $\ACp02$. Both $\AC0$ and $\ACp02$ are strictly contained in $\NC1$.
    Altogether, the following inclusions are known:
    $
        \AC0 \subseteq \ACp02 \subset \NC1 \subseteq \L \subseteq \NL \subseteq \P \subseteq
        \coNP \subseteq \PSPACE \subset \NEXP \subset \coRE.
    $

A language $A$ is \emph{constant-depth reducible} to $D$, $A\leqcd D$, if there is a logtime-uniform $\AC{0}$-circuit family with oracle gates for $D$ that decides membership in $A$. Unless otherwise stated, all reductions in this paper are \leqcd-reductions.

  \ourparagraph{Known results}
    The following theorem summarizes results for hybrid binder languages with Boolean operators $\wedge,\vee,\neg$
    that are known from the literature.

    \begin{theorem}[\cite{ArBM99b,mssw05,musc07}]
      \label{theo:known_for_and_or_neg}
      ~\par
      \begin{Enum}
        \item
          $\SaT(\{\Diamond,\dna\}, \{\wedge,\vee,\neg\})$ and $\SaT(\{\Diamond,\Box,\dna,\at\}, \{\wedge,\vee,\neg\})$ are \coRE-complete.
        \item
          $\SaT[trans](\{\Diamond,\dna\}, \{\wedge,\vee,\neg\})$ is \NEXP-complete.
       \item
          $\SaT[trans](\{\Diamond,\dna,\at\}, \{\wedge,\vee,\neg\})$ is \coRE-complete.
        \item
          $\SaT[ER](\{\Diamond,\dna\}, \{\wedge,\vee,\neg\})$ is \NEXP-complete.
        \item
          $\SaT[ER](\{\Diamond,\dna,\at\}, \{\wedge,\vee,\neg\})$ is \NEXP-complete.
      \end{Enum}
    \end{theorem}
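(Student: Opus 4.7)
The plan is to treat this as a citation theorem whose five items are already established in the indicated references, and so the proof proposal reduces to pointing to the right source for each bound and sketching the unified techniques behind them. I would split the argument into upper bounds and lower bounds, further subdivided by frame class, so that closely related constructions can be reused.

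For the upper bounds, I would handle the \coRE\ and \NEXP\ cases separately. For (1) and (3), membership in \coRE\ follows by showing that unsatisfiability is recursively enumerable; the standard route is a truth-preserving translation of $\HL(\{\wedge,\vee,\neg\})$ formulas with $\Diamond,\Box,\dna,\at$ into first-order logic, after which one uses RE-enumerability of first-order validity. For (2), (4), and (5), I would establish an exponential small-model property: any satisfiable formula $\varphi$ has a model of size $2^{O(|\varphi|)}$, after which one obtains a \NEXP\ procedure by guessing the model and verifying it in polynomial time in its size. Over transitive frames without \at\ (2), this proceeds by a filtration-like collapse that identifies states agreeing on the Fischer–Ladner closure of $\varphi$; over ER frames (4), (5), the fact that $R$ partitions $W$ into equivalence classes makes the bounded-size argument much cleaner, and the \at-operator does not break it because evaluation may always be redirected to one of the finitely many nominals or bound variables.

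For the lower bounds, the key tool is the encoding power of $\dna$ combined with $\Diamond$. For (1) and (3), \coRE-hardness is obtained by reducing from an undecidable $\mathbb{N}\times\mathbb{N}$ tiling problem (or non-halting of a Turing machine): one uses $\dna$ to pin down ``grid coordinates'' at every reachable state and formulae of the form $\dna x.\Box\dots\dna y.\Box\dots$ to synchronise the two successor relations that realise the grid, with \at\ in (3) playing the role that a second modality would otherwise play on transitive frames. For (2), (4), (5), \NEXP-hardness comes from encoding an exponential tiling on an $2^n\times 2^n$ grid; here the challenge is to force exactly $2^n$ coordinate labels using only $O(n)$ formula length, which is done by the standard trick of binding $n$ propositional bits at each reachable state with $\dna$ and using transitivity or the equivalence-class structure to compare labels across states.

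The main obstacle, and the reason the cited papers are nontrivial, lies in the \NEXP\ upper bound for transitive frames (part 2): without \at, one must be careful that the filtration does not destroy transitivity while still respecting the bindings introduced by $\dna$, and in separating why the same logic jumps to \coRE\ once \at\ is added (part 3). I would handle this last subtlety by observing that \at\ lets the formula reach arbitrarily named states, which, combined with $\dna$ and transitivity, suffices to re-encode the undecidable grid; whereas without \at, every interaction with a remote state must pass through a $\Diamond$-chain, and transitivity then forces a bounded ``horizon'' that is exploited in the small-model argument. All three technical steps — the FO-translation, the exponential filtration, and the two grid reductions — are given explicitly in \cite{ArBM99b,mssw05,musc07}, and I would simply assemble them into the five-part statement above.
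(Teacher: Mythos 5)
The paper states this theorem purely as a summary of known results, giving no proof beyond the citations to \cite{ArBM99b,mssw05,musc07}, and your treatment---attributing each item to those references and sketching the standard techniques (first-order translation for the \coRE\ upper bounds, exponential-model/quasi-model arguments for the \NEXP\ upper bounds, tiling reductions for the lower bounds)---is essentially the same approach. Your sketches are consistent with how the cited papers establish these bounds, so there is nothing to correct.
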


%

  \section{Results}
  \label{sect:results}
	
	In this section, we present our results ordered by clones. Section \ref{subsec:r1} considers clones containing only \true-reproducing functions. Clones containing the Boolean constant \false but not negation are considered in Section \ref{subsec:monotone}. Finally, in Section \ref{subsec:negation}, we study satisfiability problems based on clones with negation.

This arrangement is motivated by the observation that the availability of the Boolean constant \false and/or negation has a very strong impact on our results. Although we obtain different complexities for the clones including $\false$ but not negation (namely, $\CloneI$, $\CloneV$, $\CloneE$, and $\CloneM$), the results for these clones follow a certain pattern. But if we add negation, this picture changes completely.

Please note that opposed to the importance of the presence of \false, which makes the difference between trivial and nontrivial problems, hybrid languages can always express the constant \true as $\dna x.x$ or $\at_x x$. Therefore, we only have to consider clones including \true.

	\subsection{Why we cannot say anything without saying ``false''}\label{subsec:r1}

We start our investigation at the clone $\CloneI_2$, which contains only the identity function.\footnote{Please remember that we can always express the Boolean constant $\true$ by $\dna x.x$. Hence, there is no difference between the satisfiability problems for $\CloneI_2$ and $\CloneI_1$.} Obviously, every hybrid $\CloneI_2$-formula is satisfied by the model consisting of a singleton reflexive state to which all propositions, nominals, and state variables are labeled.

But this observation takes us much further, as we can add conjunction, disjunction, and implication for example, and still satisfy every formula by the same model. In fact, we can add every $\true$-reproducing function, i.\,e., every function that produces \true if all parameters are \true, obtaining the following result.

  \begin{theorem}\label{thm:Rone}
    $\SaT[\Fclass{F}](\{\Diamond,\Box,\dna,\at\}, B)$ for $[B] \subseteq \CloneR_1$ and all considered frame classes $\Fclass{F}\in\{\mathsf{all}, \mathsf{trans}, \mathsf{total}, \mathsf{ER}\}$ is trivial.
  \end{theorem}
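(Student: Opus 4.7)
The plan is to verify the hint stated just before the theorem: exhibit a single ``universal'' model which satisfies \emph{every} formula of $\HL(\{\Diamond,\Box,\dna,\at\},B)$ whenever $[B]\subseteq\CloneR_1$, and check that the underlying frame lies in each of the four frame classes considered.

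First I would fix the Kripke structure $K=(W,R,\eta)$ with $W=\{w\}$, $R=\{(w,w)\}$, and $\eta(a)=\{w\}$ for every $a\in\PROP\cup\NOM$, together with the assignment $g(x)=w$ for every $x\in\SVAR$. The frame $(W,R)$ is a single reflexive loop and so is trivially transitive, total, and an equivalence relation; hence it belongs to each $\Fclass{F}\in\{\mathsf{all},\mathsf{trans},\mathsf{total},\mathsf{ER}\}$. Note also that $\eta$ respects the condition $|\eta(i)|=1$ for nominals.

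Next I would prove, by structural induction on $\varphi\in\HL(\{\Diamond,\Box,\dna,\at\},B)$, that $K,g,w\models\varphi$. The atomic case $\varphi=a$ holds because $w\in[\eta,g](a)=\{w\}$ by construction, regardless of whether $a$ is a proposition, a nominal, or a state variable. For the Boolean case $\varphi=c(\varphi_1,\ldots,\varphi_n)$ with $f_c\in[B]\subseteq\CloneR_1$, the induction hypothesis gives $K,g,w\models\varphi_i$ for each $i$, so the associated truth values are all $\true$, and since $f_c$ is $\true$-reproducing we get $f_c(\true,\ldots,\true)=\true$. The modal cases $\Diamond\psi$ and $\Box\psi$ both reduce to ``$K,g,w'\models\psi$ for $w'$ with $wRw'$'', and the only such $w'$ is $w$ itself, so the induction hypothesis suffices. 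For $\at_t\psi$ with $t\in\NOM\cup\SVAR$ the evaluation shifts to the unique state named by $t$, which is $w$. Finally, for $\dna x.\psi$ the $x$-variant $g^x_w$ equals $g$, and applying the induction hypothesis to $\psi$ closes the case.

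There is no real obstacle here; the only thing to be careful about is making sure the clone-theoretic hypothesis is used in exactly the right place, namely the Boolean step, where $\true$-reproducingness of every $f_c\in[B]$ is precisely what keeps the induction going. Since the same $K,g,w$ works for every formula and every frame class simultaneously, the satisfiability problem accepts every input and is therefore trivial.
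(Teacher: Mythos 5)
Your proof is correct and follows essentially the same approach as the paper: both satisfy every $\HL(\{\Diamond,\Box,\dna,\at\},\CloneR_1)$-formula in the singleton reflexive structure with all atoms labeled to the unique state, observing that this frame lies in all four frame classes and that $\true$-reproducingness carries the Boolean step. Your explicit structural induction merely spells out what the paper leaves as an easy verification.
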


  \begin{proof}
     All Boolean functions in the clone $\CloneR_1$ are $\true$-reproducing,
         hence every propositional $\CloneR_1$-formula $\varphi$ is satisfiable.
     It is easily seen that every modal $\CloneR_1$-formula is satisfiable
     by the singleton reflexive Kripke structure $K=(\{w\},\{(w,w)\},\eta)$ with $\eta(p)=\{w\}$ for all $p \in \PROP \cup \NOM$, which is included in all frame classes we consider here.
       As $K$ includes only one state and every $\HL(\{\Diamond,\Box\},\CloneR_1)$-formula is satisfied in $w$,
     bindings and jumps do not change satisfiability. Therefore, all hybrid $\CloneR_1$-formulae are satisfiable.
  \end{proof}

It is interesting to note which Boolean operations are not contained in $\CloneR_1$. The most basic ones are the Boolean constant \false and negation, as every clone in Post's lattice that is not below $\CloneR_1$ contains one of these.

As hybrid languages can always express the Boolean constant \true, the presence of negation implies the availability of \false.
Therefore, there are two kinds of clones remaining: those containing \false but not negation, and those containing negation.
In the following subsection, we will consider the first kind, i.\,e., the monotone clones below $\CloneM$. Clones with negation will be considered in Section \ref{subsec:negation}.

	\subsection{Everything but negation -- The monotone clones}\label{subsec:monotone}

In this section, we consider the clones below $\CloneM$ that contain the Boolean constant \false; satisfiability for the clones without \false is trivial by Theorem \ref{thm:Rone}. Roughly speaking, we consider the clones $\CloneI$, $\CloneV$, $\CloneE$, and $\CloneM$. We start with $\CloneI$ and then jump to $\CloneM$. Clones containing either disjunction or conjunction are considered last, as some results will easily follow from the preceding cases.

\subsubsection{The clone \CloneI}

The clone $\CloneI$ is of particular interest, as it allows us to study the effect of having the Boolean constant \false at our disposal, yielding the following two observations.
First, the Boolean constant \false distinguishes trivial from nontrivial satisfiability problems. While all satisfiability problems for clones without \false are trivial (Theorem~\ref{thm:Rone}), all problems for clones with \false are not. The precise complexity of the latter problems will vary from almost trivial cases (Theorem~\ref{thm:I-trivial-cases}) to \L-completeness (Theorem~\ref{thm:hl-sat-Dia,Box,dna,at-I}), depending on the modal and hybrid operators allowed. Higher complexities and even undecidability occur if we add further Boolean functions as discussed in the following sections.

Second, Theorems \ref{thm:I-trivial-cases} and \ref{thm:hl-sat-Dia,Box,dna,at-I} demonstrate a duality between easy and hard cases, which we will see in all results for clones below $\CloneM$. For the full set of modal and hybrid operators, satisfiability problems over the class of all frames and the class of transitive frames will be considerably harder than those over total frames and equivalence relations. Furthermore, if we drop the $\Box$-operator when considering arbitrary or transitive frames, complexity will drop to where it is for total frames and equivalence relations.

Intuitively speaking, we might say that the complexity gap we observe in the results for monotone clones is due to the ability to express that a state has no successor by $\Box\false$. On the one hand, if we cannot express this property because of the absence of $\Box$ or if there are no such states because we only consider frames with a total accessibility relation, satisfiability for the clone $\CloneI$ is \emph{almost trivial}, i.\,e., we only need to look at one symbol of a formula to determine its satisfiability.
  \begin{theorem}\label{thm:I-trivial-cases}
    The following satisfiability problems are almost trivial.\footnote{More precisely, they are in $\Delta^\mathcal{R}_0$, a class strictly below \DLOGTIME \cite{revo97}.}
    \begin{enumerate}
      \item \label{trivial:Ia}
         $\SaT[total](\{\Diamond,\Box,\dna,\at\},B)$ and $\SaT[ER](\{\Diamond,\Box,\dna,\at\},B)$ for $[B] \subseteq \CloneI$.
      \item  \label{trivial:Ib}
         $\SaT(\{\Diamond,\dna,\at\},B)$ and $\SaT[trans](\{\Diamond,\dna,\at\},B)$ for $[B] \subseteq \CloneI$.
    \end{enumerate}
  \end{theorem}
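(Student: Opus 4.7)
The plan is to establish a purely syntactic criterion for satisfiability: a formula $\varphi$ in the fragment is unsatisfiable iff a distinguished path through its syntax tree terminates in $\false$. Since $[B]\subseteq\CloneI$, every Boolean connective occurring in $\varphi$ is a constant, the identity, or a projection --- none genuinely combines subformulae. This suggests defining a main branch of $\varphi$: starting at the root, at every projection descend into the selected argument, at the identity and at every modal or hybrid node descend into the sole argument, and stop when an atomic symbol or a Boolean constant is reached.

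The first step is to show that if the main branch terminates in a symbol other than $\false$ --- i.e.\ in an atom from $\ATOM$ or in the constant $\true$ --- then $\varphi$ is satisfied by the singleton reflexive Kripke structure $K=(\{w\},\{(w,w)\},\eta)$ from the proof of Theorem~\ref{thm:Rone}. On this model every proposition, nominal, and state variable is true at $w$ and the unique successor of $w$ is $w$ itself, so $\Diamond$, $\Box$, $\dna x.$, and $\at_t$ all preserve satisfaction at $w$, while projections and the identity discard their irrelevant arguments; thus only the main branch matters. This Kripke structure lies in all four considered frame classes.

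The second step is the converse: if the main branch ends in $\false$, then $\varphi$ is unsatisfiable. This is proved by a straightforward induction along the branch. The base case is immediate since $\false$ is unsatisfiable, and $\Diamond\psi$, $\dna x.\psi$, and $\at_t\psi$ inherit unsatisfiability from $\psi$ in any frame, while projections and the identity inherit it from their selected child. The delicate case is $\Box\psi$: in an arbitrary frame $\Box\psi$ may be true at a successor-less state even when $\psi$ is unsatisfiable, so the induction only goes through if every state has a successor. This assumption holds in $\mathsf{total}$ and $\mathsf{ER}$ frames, giving Part~1; in Part~2 the issue is sidestepped because $\Box$ is absent from the operator set, which explains why Part~2 does not need to restrict the frame class. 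Pinning down this case distinction to match the two halves of the theorem is the main conceptual point of the argument.

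For the complexity bound, computing the terminal symbol of the main branch amounts to a sequence of pointer jumps on the formula determined by the symbol currently read, with projection arities bounded since $B$ is finite. This is a very restricted local traversal that fits inside $\Delta_0^{\mathcal{R}}$ as claimed in the footnote; the only subtlety is confirming that the input encoding supports the required position addressing within this weak computational model, which is routine for the circuit classes at this level.
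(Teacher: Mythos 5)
Your proof is correct and follows essentially the same route as the paper's: the paper also reduces satisfiability to the single criterion that the terminal symbol of the (unique, or in your formulation projection-selected) operator chain is not $\false$, verified with the singleton reflexive model, with $\Box\false$ over non-total frames as the only obstruction explaining the split between the two parts. Your write-up merely spells out the projection handling, the induction, and the $\Delta_0^{\mathcal{R}}$ traversal that the paper leaves implicit.
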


  \begin{proof}
      Every formula in $\HL(\{\Diamond,\dna,\at\},\CloneI)$ 
        consists of a sequence of operators followed eventually by one final symbol from $\ATOM\cup\{0,1\}$.
      For all considered frame classes, these formulae are satisfiable if and only if this final symbol is not $0$.
      For the frame classes \textsf{total} and \textsf{ER}, this also holds if we add $\Box$ to the operators allowed.
  \end{proof}

On the other hand, the proof of the following theorem shows how to use $\Box\false$ to obtain \LOGSPACE-hardness, without using any further Boolean connectives. A matching upper bound will be presented in Theorem \ref{thm:hl-sat-total-all-n}.
  \begin{theorem}\label{thm:hl-sat-Dia,Box,dna,at-I}
    $\SaT(\{\Diamond,\Box,\dna,\at\},B)$ and $\SaT[trans](\{\Diamond,\Box,\dna,\at\},B)$ for $[B]\supseteq \CloneI_0$
     are $\L$-hard.
  \end{theorem}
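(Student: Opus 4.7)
I would reduce from the $\L$-complete problem $\mathrm{REACH}_1$ (deterministic graph reachability): given $f\colon [n] \to [n] \cup \{\bot\}$ and $s,t \in [n]$, decide whether $f^k(s) = t$ for some $k\geq 0$. The goal is to construct, via a logtime-uniform $\AC0$ circuit family, a formula $\varphi_{f,s,t} \in \HL(\{\Diamond,\Box,\dna,\at\},B)$ that is satisfiable over $\mathsf{all}$ (and over $\mathsf{trans}$) iff $(f,s,t)$ is a yes-instance. Since $[B] \supseteq \CloneI_0$, the Boolean constant $\false$ and the identity are available, and $\true$ is definable as $\dna x.x$.

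The key expressive tool, as the paragraph preceding the theorem emphasises, is $\Box\false$, which asserts ``the current state has no $R$-successor''. In combination with $\at$, $\Diamond$ and $\dna$ one obtains several useful primitives without any Boolean binary connective: (i) $\at_{i_u}\Diamond i_v$ expresses the edge $i_uRi_v$; (ii) any chain of the form $\Diamond \at_{i_{u_1}} \Diamond \at_{i_{u_2}} \cdots \Diamond \at_{i_{u_{k-1}}} \Diamond i_v$ is equivalent to the conjunction of the statements ``the current state, $i_{u_1},\dots, i_{u_{k-2}}$ each have some $R$-successor'' together with ``$i_{u_{k-1}} R i_v$'', providing a scaffolded conjunction; (iii) dually, a $\Box\at$-chain produces a disjunction of ``has no successor''-style facts; and (iv) $\dna x$ records the current state for later retrieval via $\at_x$ or the atom $x$.

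Using these primitives, I would build $\varphi_{f,s,t}$ as a single nested formula whose spine is the walk formula $\at_{i_s} \dna x_0. \Diamond \dna x_1. \Diamond \dna x_2 \cdots \Diamond \dna x_n. \at_{i_t} x_n$, which traces an $R$-walk of length at most $n$ starting at $i_s$ and asserts that its endpoint equals $i_t$. To force each $\Diamond$-step to follow $f$ rather than an arbitrary $R$-edge chosen by the existentially quantified Kripke structure, I would interleave $\Box\at$-based case-analysis gadgets into the body of the walk: at each step, the gadget inspects the current nominal $i_v$ (using the scaffolded conjunction and disjunction primitives) and requires the next state to be $i_{f(v)}$, or marks $v$ as a dead-end via $\at_{i_v}\Box\false$ when $f(v) = \bot$. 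Producing the formula in $\AC0$ reduces to reading the bits of $f$ that encode $f(v)$ for each $v$, which is a constant-depth lookup.

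The main obstacle is that, lacking $\wedge$ and $\vee$, the formula's syntactic shape is forced to be a tree with only unary internal nodes and a single atomic leaf, so all constraints must coexist inside one nested chain. The $\Diamond\at$-scaffolding packs many ``has successor'' conjuncts along a single chain, while $\Box\at$-chains supply the disjunctions needed for the case analysis on the current vertex; the $\dna$-bindings relay information between separated positions along the chain. For the $\mathsf{trans}$ frame class the same construction works, because the $\Box\false$ dead-ends survive transitivity (a state with no outgoing edges stays that way after taking the transitive closure), and reachability from $i_s$ to $i_t$ via the $R$ of a satisfying Kripke structure coincides with reachability via $f$ once the case-analysis gadgets have pinned down every $R$-edge taken along the walk.
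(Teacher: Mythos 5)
Your intuition about what $\CloneI_0$ leaves you to work with -- pure chains of unary operators, the $\Diamond\at$ ``scaffolded conjunction'' of has-a-successor facts, and the $\Box\at$ ``has-no-successor'' disjunction -- is accurate, but the central step of your reduction, the per-step case-analysis gadget, is only asserted and cannot be built from these primitives. A formula over $\false$, the identity, and the unary operators $\Diamond,\Box,\dna x.,\at_t$ is a single chain ending in exactly one atom. Consequently (a) an edge assertion of the form $\at_{i_u}\Diamond i_v$ can occur only once, as the terminal piece of the entire formula, so you cannot ``pin down every $R$-edge taken along the walk''; (b) a $\Box\at$-chain branches only on whether the current state has a successor, never on \emph{which} nominal names it, so the conditional ``if the current state is $i_v$ then the next state is $i_{f(v)}$'' is not expressible; and (c) since the Kripke structure and the labeling of nominals are existentially quantified, your spine $\at_{i_s}\dna x_0.\Diamond\dna x_1.\cdots\Diamond\dna x_n.\at_{i_t}x_n$ is satisfied in the one-state reflexive model with every nominal naming that single state, independently of $f$. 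Without a working gadget, no-instances do not map to unsatisfiable formulas, so the ``only if'' direction of your reduction fails; this is a genuine gap, not a detail to be filled in.

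The paper sidesteps exactly this trap by never tracing the walk inside the model. It reduces from \emph{Order between Vertices} and simulates the traversal of the successor relation purely by re-binding state variables: a prefix $\dna v_1.\cdots\dna v_n.\Diamond\dna s.$ parks $s$ on a successor state, the chain $\alpha^n$ (the $n$-fold repetition of all $\at_{v_k}\dna v_l$ for edges $(v_k,v_l)$ with $v_k\neq s$) propagates the binding of $s$ along the graph, and the single structural check $\at_t\Box\false$ at the end detects whether $t$ has been relocated to the dead-end state. A two-state model $u\to w$ suffices for yes-instances, and for no-instances the argument needs only the dichotomy ``$u$ has a successor'' (then $\at_t\Box\false$ fails) versus ``$u$ has none'' (then $\Diamond\dna s.$ fails) -- which is precisely the only distinction $\CloneI_0$ can make; the same formula and argument work over transitive frames. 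If you wish to keep your deterministic-reachability starting point, you would need to replace the walk-in-the-model idea by such a variable-propagation mechanism rather than by case-analysis gadgets.
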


  \begin{proof}
   We give a reduction from the problem \emph{Order between Vertices} (\ORD) which is known to be $\L$-complete \cite{etessami:1997}.

    \problemdef%
        {$\ORD$}
        {a finite set of vertices $V$, a successor-relation $S$ on $V$, and two vertices $s,t\in V$}
        {is $s\leq_S t$, where $\leq_S$ denotes the unique total order induced by $S$ on $V$?}

    Notice that $(V,S)$ is a directed line-graph.
    Let $(V,S,s,t)$ be an instance of $\ORD$. 
    We construct a $\HL(\{\Diamond,\Box,\dna,\at\}, \CloneI_0)$-formula $\varphi$ 
    that is satisfiable if and only if $s\leq_S t$. 

    We use $V=\{v_1,\ldots,v_n\}$ as state variables. The formula $\varphi$ consists of three parts. 
    The first part binds all variables except $s$ to one state and the variable $s$ to a successor state.
    The second part of $\varphi$ binds a state variable $v_l$ to the state labeled by $s$ iff $s\leq_S v_l$. 
    Let $\alpha$ denote the concatenation of all $\at_{v_{k}}\dna v_{l}$ with $(v_k,v_l)\in S$ and $v_k\not=s$,
     and $\alpha^n$ denotes the $n$-fold concatenation of $\alpha$.
    Essentially, $\alpha^n$ uses the assignment collect all $v_i$ with $s\leq_S v_i$ in the state labeled $s$.
    \begin{claim}
      $K,g,u\models \alpha^n\, \at_x\psi$ iff
      $K,g',u\models \at_x\psi$ for $g'$ with $g'(v_i)=g(s)$ for all $v_i\geq_S s$ and $g'(v_i)=g(v_i)$ for all $v_i\not\geq_S s$.
    \end{claim}

    \noindent
    It is not hard to prove the Claim.
    The last part of $\varphi$ guarantees that $s$ was initially bound to another state than the remaining variables,
    and checks whether $s$ and $t$ are bound to the same state after this procedure.
    \[
      \varphi = \dna v_1.\dna v_2. \cdots \dna v_n. \Diamond \dna s. ~ \alpha^n ~ \at_t \Box \false
    \]
%
    To prove the correctness of our reduction, we show that $\varphi$ is satisfiable if and only if $s\leq_S t$. 
    If $s\leq_S t$, then for $K=(\{u,w\}, \{(u,w)\},\eta)$ with arbitrary $\eta$ and $g$ it holds that $K,g,u\models \varphi$.
    For $s\not\leq_S t$, consider any $K$ with state $u$.
    We show that $K,g,u\not\models\varphi$.
    Let $g_1$ be the assignment obtained from $g$ after the bindings of the first part
    $\dna v_1.\dna v_2. \cdots \dna v_n. \Diamond \dna s$ of $\varphi$,
    and let $g'_1$ be the assignment obtained from $g_1$ after the first part and the second part $\alpha^n$.
    By the Claim it follows that $g'_1(t)=g_1(t)=\{u\}$.
    If $u$ has no successor, then the first part $\dna v_1.\dna v_2. \cdots \dna v_n. \Diamond \dna s$ of $\varphi$
    is not satisfied.
    If $u$ has a successor, then $K,g'_1,u\not\models \Box \false$,
    and it follows that $K,g'_1,u\not\models \at_t \Box \false$ and therefore $K,g,u\not\models\varphi$.
  \end{proof}

\subsubsection{The clone \CloneM}

Let us now consider the clone $\CloneM$ of all monotone functions. Here, more precisely for all clones between $\CloneS_{11}$ and $\CloneM$, we obtain the same duality as in the previous section, only at a higher level of complexity.
For the ``hard cases'', i.\,e., those satisfiability problems where we consider non-total frame classes and all modal and hybrid operators, we obtain \PSPACE-hardness. For the class of all frames, this follows immediately from the corresponding result for modal logic.
\begin{theorem}[\cite{bhss05b,hem01}]
       $\SaT(\{\Diamond,\Box\}, B)$ is \PSPACE-hard
       for $[B]\supseteq \CloneS_{11}$.
\end{theorem}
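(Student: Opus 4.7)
The plan is to reduce from the PSPACE-complete problem $\SaT(\{\Diamond,\Box\},\{\wedge,\vee,\neg\})$ (Ladner's theorem, \cite{lad77}). Because $[B]\supseteq\CloneS_{11}$ contains the ternary function $f(x,y,z)=x\wedge(y\vee z)$ together with $\false$, conjunction is immediately available via $f(x,y,y)=x\wedge y$, but disjunction is not expressible on its own (the function $\vee$ is not $1$-separating, so $\vee\notin\CloneS_1\supseteq\CloneS_{11}$). The central trick is to introduce a fresh propositional atom $t$ and to force $t$ to hold at every state reachable from the evaluation point within the modal depth $d$ of the input formula. At any such state, $f(t,y,z)$ collapses to $y\vee z$, giving disjunction on demand; the propagation conjunction $t\wedge\Box t\wedge\Box\Box t\wedge\cdots\wedge\Box^d t$ is itself monotone and can be written in $\CloneS_{11}$ by combining nested instances of $f$ with $\false$.

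First I would normalise the input $\varphi$ into negation normal form, replacing each literal $\neg p$ by a fresh positive atom $p^-$; this yields a $\{\wedge,\vee,\Diamond,\Box\}$-formula $\varphi'$ over an enlarged alphabet $\{p,p^-:p\in\mathrm{var}(\varphi)\}$. Second, I would syntactically translate $\varphi'$ into $\HL(\{\Diamond,\Box\},\CloneS_{11})$ by replacing every $\wedge$ and $\vee$ with its $\CloneS_{11}$-expression (using $t$ for the disjunctions) and prefixing the $t$-propagation axiom. The easy direction of the reduction is immediate: every Kripke model of $\varphi$ extends to one of the translated formula by letting $p^-$ be the complement of $p$ everywhere and setting $t$ true throughout.

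The main obstacle, and the only nontrivial point, is the converse: since the translated formula is monotone in $p$ and $p^-$, nothing in the translation itself prevents both polarities from being true simultaneously in a satisfying model, which would be semantically inconsistent with $\varphi$. The remedy, following \cite{hem01,bhss05b}, is to realise the exclusion \emph{structurally} rather than propositionally: atomic tests on $p$ and on $p^-$ are routed through distinct witness worlds created by the surrounding $\Diamond/\Box$ gadgets, and $\Box\false$-based guards (still monotone and thus expressible in $\CloneS_{11}$) kill any branch in which a witness world carries the wrong polarity. This lifts Lewis's monotone encoding of SAT \cite{lew79} through the modal layers of $\varphi$. Once the gadgets are in place, one verifies by induction on $\varphi'$ that any satisfying Kripke model of the translation can be pruned to a model in which each reachable world carries exactly one polarity per atom, yielding a model of $\varphi$. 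The translation is clearly computable in logarithmic space and blows up polynomially in $|\varphi|$ and $d$, so $\PSPACE$-hardness of $\SaT(\{\Diamond,\Box\},B)$ transfers to every $[B]\supseteq\CloneS_{11}$.
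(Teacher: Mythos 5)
This statement is not proved in the paper at all: it is imported from \cite{bhss05b,hem01}, and the only in-paper hint at how those proofs work is the remark following Theorem~\ref{thm:hl-sat-Dia,Box,dna,at-m} (and the analogous construction in Lemma~\ref{lem:hl-sat-Dia,Box,dna,at-m}): the truth values $\top$ and $\bot$ are encoded \emph{structurally}, as a state having at least one successor ($\Diamond\true$) versus having none ($\Box\false$), typically in a reduction from QBF. Your opening move --- forcing a fresh atom $t$ up to the modal depth by $t\wedge\Box t\wedge\dots\wedge\Box^d t$ and using $x\wedge(y\vee z)$ with first argument $t$ to recover disjunction --- is sound and is indeed the standard way to lift hardness for $\{\wedge,\vee,\false\}$ to every $B$ with $[B]\supseteq\CloneS_{11}$ (with the usual caveat that simulating $x\wedge y$ by $f(x,y,y)$ duplicates an argument, so nested conjunctions must be re-balanced, or one cites \cite{lew79,sch07a}, to avoid exponential blow-up).

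The genuine gap is the step you yourself call the only nontrivial point. Your reduction from Ladner's problem keeps truth values in atoms $p,p^-$ and claims that monotone ``$\Box\false$-based guards'' routed through witness worlds can ``kill any branch in which a witness world carries the wrong polarity''. This cannot work as described: every $\HL(\{\Diamond,\Box\},B)$-formula with $[B]\subseteq\CloneM$ is monotone in the labeling, so if $K,g,w\models\psi$ and the labeling is enlarged --- in particular if $p$ and $p^-$ are made true at \emph{every} state --- then $\psi$ is still satisfied on the same frame. Hence no monotone guard is ever violated because a world carries an extra atom, nothing in your translation excludes models in which all reachable worlds satisfy both polarities, and the concluding ``pruning'' induction would have to extract a consistent polarity choice per world and atom --- which is precisely the gadget that is missing. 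The cited proofs sidestep this by abandoning atoms for the truth-value encoding altogether and using the only negation-like feature the monotone fragment has, namely that $\Diamond\true$ and $\Box\false$ are mutually exclusive and jointly exhaustive structural properties of a state; compare Lemma~\ref{lem:hl-sat-Dia,Box,dna,at-m}, which implements exactly this idea (over transitive frames, with hybrid operators standing in for the frame-building work that \cite{bhss05b,hem01} do purely modally). To complete your argument you would have to actually construct such successor-based witness gadgets; enforcing $p^-\equiv\neg p$ propositionally is impossible in this fragment.
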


Unfortunately, the proof of this result does not generalize to transitive frames. 

  \begin{lemma} \label{lem:hl-sat-Dia,Box,dna,at-m}
    $\SaT[trans](\{\Diamond,\Box,\dna,\at\},\{\land,\lor,\false\})$ is $\PSPACE$-hard.
  \end{lemma}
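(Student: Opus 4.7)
The plan is to reduce from $\textsc{QBF}$, which is PSPACE-complete. Given an instance $\Phi = Q_1 x_1 \cdots Q_n x_n\,\psi$ with $\psi$ quantifier-free and, WLOG, in CNF over literals in $\{x_1,\neg x_1,\ldots,x_n,\neg x_n\}$, I would construct in logarithmic space an $\HL(\{\Diamond,\Box,\dna,\at\},\{\land,\lor,\false\})$-formula $\widehat\Phi$ that is satisfiable over a transitive frame iff $\Phi$ is true.

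The central difficulty is to simulate Boolean negation in the matrix, since our clone contains only $\land$, $\lor$ and $\false$. The idea is to follow the recipe used in the proof of Theorem~\ref{thm:hl-sat-Dia,Box,dna,at-I} and lift ``negation'' to the hybrid level, using two distinguished nominals $T$ and $F$ that play the role of the two truth values. For each variable $x_i$ I use a state variable $x_i$ that will, during evaluation, be bound via $\dna$ to exactly one of the worlds $T$ or $F$. A positive literal $x_i$ is then translated to $\at_{x_i} T$, and a negative literal $\neg x_i$ to $\at_{x_i} F$; both are genuinely monotone formulae. The matrix of $\Phi$ being in CNF, we translate it using $\land$ and $\lor$ only, obtaining a formula $\widehat\psi$ of size linear in $|\psi|$.

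For the quantifier prefix I plan to nest $\Diamond\dna x_i.(\cdots)$ for $\exists x_i$ and $\Box\dna x_i.(\cdots)$ for $\forall x_i$, threaded through a small auxiliary structure so that the two successors picked (respectively universally quantified) at each stage are precisely the worlds $T$ and $F$. The intended model is a fixed base gadget consisting of a root $r$ and the two worlds $T,F$, reached directly from $r$; in the ``for all'' case the formula inside $\Box$ is required to hold at both, yielding the desired universal quantification over $\{T,F\}$.

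The second, more subtle obstacle is transitivity: once the frame is transitively closed, a nested $\Box$ at a level-$i$ world also ``sees'' worlds introduced deeper in the construction, so the naive nesting would over-constrain the model. To overcome this I would combine two tools already present in the proof of Theorem~\ref{thm:hl-sat-Dia,Box,dna,at-I}: the spy-point idea (use a root nominal $r$ and jump back via $\at_r$ between quantifier blocks to reset the modal context) and the $\Box\false$ trick to turn all ``irrelevant'' worlds into sinks, so that any universal quantifier ranging over them is vacuously satisfied. Concretely, extra level-marker nominals $d_0,\ldots,d_n$ together with carefully placed $\at_{d_i}$ jumps would make it possible to commit exactly once per level to a truth value; irrelevant successors in the transitive closure are blocked from contributing because they satisfy $\Box\false$ and therefore satisfy any deeper $\Box$-statement in a monotone way.

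Once the construction is in place, the correctness proof has two directions. For the ``if'' direction one shows that any transitive Kripke structure satisfying $\widehat\Phi$ gives rise, via the values bound to $x_1,\ldots,x_n$ along a play of the modal game, to a winning strategy in the QBF-game for $\Phi$. For the ``only if'' direction, from a winning strategy one constructs a transitive model built on the fixed base $\{r,T,F\}$ (plus sink worlds for each level marker), in which the nested $\Diamond/\Box$-pattern picks at each stage the correct successor in $\{T,F\}$ and the matrix $\widehat\psi$ evaluates to true. The step I expect to be hardest, and where careful bookkeeping will be needed, is the design of the level markers so that transitive reachability does not break the one-quantifier-per-level discipline while still remaining monotone.
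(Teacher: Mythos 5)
Your overall strategy (reduce from QBF, simulate the quantifier prefix by $\Diamond/\Box$ combined with $\dna$, and jump back to a named root between quantifiers) is the same as the paper's, but your encoding of truth values has a genuine soundness gap. You represent the two truth values by nominals $T,F$ and translate $x_i \mapsto \at_{x_i}T$ and $\neg x_i \mapsto \at_{x_i}F$. In the fragment $\{\land,\lor,\false\}$ nothing you can write forces $\eta(T)\neq\eta(F)$, nor forces both $T$ and $F$ to be successors of the root $r$. Consequently a transitive model with $\eta(T)=\eta(F)=\{w\}$ and $r\to w$ satisfies the translation of the false instance $\exists x\,(x)\wedge(\neg x)$, and a model in which the only successor of $r$ is the $T$-state satisfies the translation of the false instance $\forall x\,(x)$; so the direction ``$\widehat\Phi$ satisfiable $\Rightarrow$ $\Phi$ true'' fails as described. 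Any repair has to give the two truth values \emph{asymmetric structural} characterizations that are monotonically expressible, which is exactly what the paper's proof does: truth of $x_i$ is ``the state bound to $x_i$ has a successor'' ($\at_{x_i}\Diamond\true$), falsity is ``it has none'' ($\at_{x_i}\Box\false$), and the conjuncts $\at_s\Diamond\Diamond\true \wedge \at_s\Diamond\Box\false$ guarantee that the root has successors of both kinds. Once you separate $T$ from $F$ in this way, the nominals themselves become superfluous.

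Your second concern, transitivity and the level markers $d_0,\dots,d_n$, is an avoidable complication rather than something you have actually solved: if every quantifier is prefixed by a jump to the root, i.e. the prefix is $\at_s\Diamond\dna x_1.\ \at_s\Box\dna x_2.\cdots$, then every modal step is taken from the same named state, the ``one quantifier per level'' bookkeeping disappears entirely, and the intended model can be the fixed transitive two-state structure $(\{s,t\},\{(s,s),(s,t)\})$. Soundness then needs no sink trick beyond the literal encoding itself: in an arbitrary satisfying model every successor of $s$ either has or lacks a successor, hence carries a definite truth value, and additional successors only strengthen the universal requirements while an existential choice of a ``useless'' successor only hurts. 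The step you yourself flag as hardest --- designing the markers so that transitive reachability does not break the level discipline --- is precisely the step that the root-jump construction eliminates.
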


  \begin{proof}
  $\QBFSAT$ is a standard $\PSPACE$-complete set.
  Its instances are quantified Boolean formulae in conjunctive normal form,
  e.\,g.\ 
  {

    \centering

    $\varphi_0 = \exists x_1 \forall x_2 \exists x_3 \forall x_4 (x_1 \vee \neg x_2) \wedge (\neg x_1 \vee x_2 \vee x_3 \vee \neg x_4)$.

  }
  A quantified Boolean formula is in $\QBFSAT$ if and only if it evaluates to true.
  We give a reduction that reduces $\QBFSAT$ to $\SaT[trans](\{\Diamond,\Box,\dna,\at\},\{\land,\lor,\false\})$.
  An instance $\varphi=\exists x_1 \forall x_2\cdots Q x_n \psi$
  is transformed to 
   \[f(\varphi)=(\at_s\, \Diamond\Diamond 1) \wedge (\at_s\, \Diamond \Box \false) \wedge  
                \at_s\Diamond \dna x_1 . \at_s \Box \dna x_2. \cdots \at_s Q' \dna x_n. r(\psi),\]
  where $Q'=\Diamond$ ($Q'=\Box$) if $Q=\exists$ ($Q=\forall$),
  and $r(\psi)$ is obtained from $\psi$ by replacing all appearances of \emph{positive} literals $x_i$ with $\at_{x_i} \Diamond \true$,
  respectively replacing all appearances of \emph{negative} literals $\neg x_i$ with $\at_{x_i} \Box \false$.
  Constant symbols remain unchanged.
  As an example, the $\QBFSAT$ instance $\varphi_0$ from above is transformed to
  {

   \centering

   ~~~~~~~~$(\at_s\, \Diamond\Diamond \true) \wedge (\at_s\, \Diamond\Box \false) \wedge  \at_s \Diamond \dna x_1. \at_s \Box \dna x_2. 
               \at_s \Diamond \dna x_3. \at_s \Box \dna x_4.$\hfill\mbox{}\\
   \hfill
        $(\at_{x_1} \Diamond \true \vee \at_{x_2} \Box \false) \wedge 
              (\at_{x_1} \Box \false \vee \at_{x_2} \Diamond \true \vee \at_{x_3} \Diamond \true \vee \at_{x_4} \Box \false)$.
  }\\
  It is clear that  $f$ is a $\leqcd$-reduction.

  The following claim implies $\QBFSAT \leqcd \SaT[trans](\{\Diamond,\Box,\dna,\at\},\{\land,\lor,\false\})$.
  \begin{claim}
  $\varphi$ evaluates to true if and only if 
  $K_2,g,s \models f(\varphi)$ for $K_2=(W,R,\eta)$
      with the transitive frame $(W,R)=(\{s,t\}, \{(s,s),(s,t)\})$ and $\eta(x_i)=W$.
  \end{claim}
  
  \noindent
  The proof of the Claim is straightforward.
  \end{proof}

\PSPACE-hardness also follows for all clones containing $\CloneS_{11}$.
    \begin{theorem}
      \label{thm:hl-sat-Dia,Box,dna,at-m}
       $\SaT[trans](\{\Diamond,\Box,\dna,\at\},B)$ is \PSPACE-hard
       for $[B]\supseteq \CloneS_{11}$.
    \end{theorem}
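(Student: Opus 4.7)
The plan is to bootstrap from Lemma~\ref{lem:hl-sat-Dia,Box,dna,at-m}: given $\PSPACE$-hardness for $B=\{\land,\lor,\false\}$, I will show that any $\HL(\{\Diamond,\Box,\dna,\at\},\{\land,\lor,\false\})$-formula can be locally rewritten into an equivalent $\HL(\{\Diamond,\Box,\dna,\at\},B)$-formula whenever $[B]\supseteq\CloneS_{11}$. Both $\land$ and $\false$ lie in $\CloneS_{11}\subseteq[B]$, so each admits a fixed $B$-term that can be substituted in place of the connective with only constant blow-up. The delicate case is $\lor$, which is monotone but \emph{not} in $\CloneS_{11}$.

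To handle disjunction, I exploit the fact that hybrid logic always expresses the Boolean constant $\true$ as $\dna u.u$ for a fresh state variable $u$: whatever world $w$ is current and whatever the ambient assignment, binding $u$ to $w$ makes the atom $u$ evaluate to \emph{true} at $w$. Combined with the semantic identity $y\lor z\equiv g(\true,y,z)$ for $g(x,y,z)=x\land(y\lor z)$, and the fact that $g$ is a generator of $\CloneS_{11}$ and hence has a fixed $B$-term $t_g$, this lets me replace each subformula $\alpha\lor\beta$ by $t_g[\dna u.u,\alpha,\beta]$, using a fresh $u$ per occurrence to prevent variable capture. Applying these rewrites to the output of the $\QBFSAT$-reduction from Lemma~\ref{lem:hl-sat-Dia,Box,dna,at-m} yields a formula in $\HL(\{\Diamond,\Box,\dna,\at\},B)$ of polynomial size; since $B$ is fixed, the whole transformation is computable by a logtime-uniform $\AC0$-circuit family and thus constitutes a $\leqcd$-reduction.

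The only real obstacle I anticipate is bookkeeping: one has to ensure that the many fresh variables introduced by the $\dna u.u$-gadgets are pairwise disjoint and disjoint from the variables $x_1,\ldots,x_n,s$ already used in the original reduction, so that no unintended binding occurs. Since each gadget is self-contained---the bound variable $u$ appears only in its own body---indexing the copies suffices. Correctness then follows by a routine structural induction, using $\dna u.u\equiv\true$ and $y\lor z\equiv\true\land(y\lor z)$ at every $\lor$-node to check that the rewritten formula is equisatisfiable with the original over transitive frames.
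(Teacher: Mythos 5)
Your overall route is the same as the paper's: bootstrap from Lemma~\ref{lem:hl-sat-Dia,Box,dna,at-m} and rewrite the $\{\land,\lor,\false\}$-connectives into $B$-terms, compensating for what is missing from $\CloneS_{11}$ by the hybrid expressibility of $\true$ (you use $\dna u.u$ and $y\lor z\equiv g(\true,y,z)$ for $g(x,y,z)=x\land(y\lor z)$; the paper equivalently rewrites the formula over $B\cup\{\ONE\}$, using $[B\cup\{\ONE\}]=\CloneM$, and then replaces $\ONE$ by $\at_z z$). Your $\lor$-gadget is sound: $\dna u.u$ holds at every state under every assignment, the immediately bound fresh variable cannot be captured, and $g\in\CloneS_{11}\subseteq[B]$, so the rewriting preserves equivalence state by state.

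The gap is the size bound. You assert that, since $\land$, $\false$ and $g$ lie in $[B]$, each ``admits a fixed $B$-term that can be substituted with only constant blow-up'', and conclude that the rewritten formula has polynomial size. That does not follow: a fixed $B$-term representing $\land$ (or $g$) may use one of its arguments more than once, and the formula $f(\varphi)$ produced by the reduction of Lemma~\ref{lem:hl-sat-Dia,Box,dna,at-m} has Boolean nesting depth linear in the number of clauses (and in the clause width), so iterated substitution can duplicate large subformulae at every level; nothing in your argument rules out an exponential blow-up for some bases $B$ with $[B]\supseteq\CloneS_{11}$. This is exactly the point where the paper's proof invokes the results of Lewis and Schnoor \cite{lew79,sch07a}, which guarantee that the passage to an equivalent $B\cup\{\ONE\}$-formula can be computed in polynomial time. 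Your argument can be repaired either by citing those results, or, more elementarily for this particular reduction, by first re-bracketing the long conjunctions and disjunctions of $f(\varphi)$ as balanced binary trees: then the Boolean depth is $O(\log|f(\varphi)|)$, and substituting fixed $B$-terms whose maximal argument multiplicity is a constant $c$ inflates the size by at most $c^{O(\log|f(\varphi)|)}$, i.e.\ polynomially. The same caveat undermines your claim that the transformation is a $\leqcd$-reduction; note that a $\leqPm$-reduction, which is what the paper establishes for this step, already suffices for \PSPACE-hardness.
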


    \begin{proof}
      It suffices to show $\SaT[trans](O, \{\land,\lor,\false\}) \leqPm \SaT[trans](O, B)$. The result then follows from Lemma \ref{lem:hl-sat-Dia,Box,dna,at-m}.

      Take $\varphi \in \HL(O, \{\land,\lor,\false\})$.
      Since $[B \cup \{\ONE\}] = \CloneM$,
      we can rewrite $\varphi$ as an $\HL(O, B \cup \{\ONE\})$-formula $\varphi'$,
      leaving modal and hybrid operators untouched.
      Due to \cite{lew79,sch07a}, this  can be computed in polynomial time.
      Now we can easily transform $\varphi'$ into an $\HL(O, B)$-formula $\varphi''$,
      replacing all occurrences of \ONE with $\at_z z$.
      Clearly, $\varphi$ and $\varphi''$ are equisatisfiable over transitive (and even over arbitrary) frames.
    \end{proof}

  The proof of Theorem \ref{thm:hl-sat-Dia,Box,dna,at-m} crucially depends on the existence of states without successor, 
  and the ability to express this property: 
  the truth values $\bot$ (resp.\ $\top$) are encoded as states having no (resp.\ at least one) successor.
  If there are no such states (Theorem \ref{thm:hl-sat-total,er-all-m}) or if we cannot express this property (Corollary \ref{cor:hl-sat-all,trans-Dia,dna,at-m}), complexity drops to $\NC1$.

   \begin{theorem} \label{thm:hl-sat-total,er-all-m}
    $\SaT[total](O, B)$ and $\SaT[ER](O, B)$
    are $\NC1$-complete under 
    $\leqcd$-reductions for $O \subseteq \{\Diamond,\Box,\dna,\at\}$ and $\CloneS_{11} \subseteq [B] \subseteq \CloneM$.
  \end{theorem}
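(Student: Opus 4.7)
The plan is to prove $\NC{1}$ membership by exhibiting a canonical maximal satisfying model, and $\NC{1}$-hardness by reducing from the monotone Boolean formula value problem.

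For the upper bound, let $K^{\star}=(\{w^{\star}\},\{(w^{\star},w^{\star})\},\eta^{\star})$ with $\eta^{\star}(a)=\{w^{\star}\}$ for every $a\in\PROP\cup\NOM$, and let $g^{\star}$ send every state variable to $w^{\star}$; this $K^{\star}$ is based on a total ER frame and hence lies in both considered frame classes. I would show by structural induction on $\varphi\in\HL(O,B)$ that whenever $\varphi$ is satisfied by some Kripke structure based on a $\mathsf{total}$ (resp.\ $\mathsf{ER}$) frame, one has $K^{\star},g^{\star},w^{\star}\models\varphi$. The atomic step is immediate because $w^{\star}$ carries every label. The Boolean step uses monotonicity of every $c\in[B]\subseteq\CloneM$: by the inductive hypothesis the truth value of each argument in the original model is dominated by its truth value at $w^{\star}$, and a monotone function preserves this domination. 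The $\Diamond$-case follows from the self-loop at $w^{\star}$; the $\Box$-case additionally uses totality (or reflexivity of ER frames) of the original model to furnish an actual successor witness to which the inductive hypothesis applies. The $\at_{t}$- and $\dna x.$-cases are immediate since $K^{\star}$ has a unique world. The converse implication is trivial, so $\varphi$ is satisfiable iff $K^{\star},g^{\star},w^{\star}\models\varphi$. On $K^{\star}$, every atom evaluates to $\true$ and every modal or hybrid operator acts as the identity, so deciding $K^{\star},g^{\star},w^{\star}\models\varphi$ amounts to evaluating a monotone Boolean formula of size linear in $|\varphi|$, placing the problem in $\NC{1}$ by Buss' theorem.

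For the lower bound I would reduce from the monotone Boolean formula value problem, known to be $\NC{1}$-complete under $\leqcd$. Given $\psi$ over $\{\land,\lor,\false,\true\}$, use $[B\cup\{\true\}]=\CloneM$ (cf.\ the rewriting step in the proof of Theorem~\ref{thm:hl-sat-Dia,Box,dna,at-m}) to replace each $\land$ and $\lor$ by a constant-size term over $B$-operators and the constant $\true$. Then replace every remaining occurrence of $\true$ by a fresh atomic proposition $p$; the resulting $\psi'$ lies in $\HL(O,B)$ irrespective of the choice of $O$, even for $O=\emptyset$. By the upper bound claim, $\psi'$ is satisfiable iff $K^{\star},g^{\star},w^{\star}\models\psi'$, and since $p$ is labelled at $w^{\star}$ the latter is equivalent to $\psi$ evaluating to $\true$. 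Each substitution is local and of constant depth, so the overall transformation is an $\leqcd$-reduction.

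The delicate point is the $\Box$-step of the inductive claim, which crucially depends on every world of the original model having at least one successor; otherwise $\Box\false$ can be satisfied at a dead-end world but never holds at $w^{\star}$ in $K^{\star}$, so the collapse to $K^{\star}$ fails. This matches the easy/hard dichotomy highlighted in Section~\ref{subsec:monotone}: once dead-end worlds become available the complexity jumps to $\PSPACE$-hard, as witnessed by Theorem~\ref{thm:hl-sat-Dia,Box,dna,at-m}.
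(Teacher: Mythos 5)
Your upper bound is essentially the paper's proof: the same singleton reflexive structure with all atoms true, the same structural induction showing that satisfiability over \textsf{total} and \textsf{ER} frames collapses to satisfaction in that model (with the $\Box$-case resting on totality/reflexivity), and the same final reduction to evaluating a monotone propositional formula. That part is fine.

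The gap is in your $\leqcd$-hardness argument. You reduce from the formula value problem over the fixed base $\{\land,\lor,\false,\true\}$ and then claim that rewriting each $\land$ and $\lor$ as a constant-size term over $B\cup\{\true\}$ is ``local and of constant depth''. For an arbitrary $B$ with $\CloneS_{11}\subseteq[B]\subseteq\CloneM$ this does not go through: the shortest $B\cup\{\true\}$-terms representing $\land$ and $\lor$ may use their arguments more than once (already for the base $\{x\land(y\lor z),\false\}$ of $\CloneS_{11}$ one writes $x\land y$ as $x\land(y\lor y)$), so the substitution is not local --- carried out recursively through a nested formula it can blow the size up exponentially, and in any case producing the rewritten formula is not an obviously constant-depth computable transformation. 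This is exactly why the paper, wherever it performs such a base change (Theorem \ref{thm:hl-sat-Dia,Box,dna,at-m}, Lemma \ref{lem:any_S1,D_Dia,dna}), only claims a polynomial-time reduction citing \cite{lew79,sch07a}; a $\leqPm$-reduction is useless for $\NC1$-hardness. The paper sidesteps the issue by invoking Schnoor's result that the formula value problem is $\NC1$-complete under $\leqcd$-reductions \emph{for the base $B$ itself} whenever $\CloneS_{11}\subseteq[B]\subseteq\CloneM$, and then using the satisfiability-equals-evaluation equivalence established in the upper bound; your reduction would be repaired the same way, at which point your trick of replacing $\true$ by a fresh proposition $p$ (needed since $\true\notin\CloneS_{11}$, and legitimate even for $O=\emptyset$) still works.
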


  \begin{proof}
    Any propositional $\CloneM$-formula $\varphi$ is satisfiable if and only if it is satisfied by the assignment
    that sets all atoms to true.
    We generalize this result to hybrid logic 
    for frame classes $\mathsf{total}$ and $\mathsf{ER}$.   

    Let $\varphi \in \HL(\{\Diamond,\Box,\dna,\at\}, \CloneM)$ and let $K_1:=(\{w_1\},\{(w_1,w_1)\},\eta_1)$ be the reflexive singleton Kripke structure 
    with $\eta_1(p)=\{w_1\}$ for all $p \in \PROP \cup \NOM$. 
    The following claim is easy to verify.
\begin{claim}
    $K_1,g_1,w_1\models\psi$ is equivalent to
    $K_1,g_1,w_1\models \lambda \psi$ 
    for any operator $\lambda$.
\end{claim}

    \noindent
    We show that $\varphi \in \SaT[total](\{\Diamond,\Box,\dna,\at\}, \CloneM)$ if and only if $K_1,g_1,w_1 \models \varphi$.
    We proceed by induction on the structure of $\varphi$.
    If $\varphi\in \PROP \cup \NOM\cup \SVAR$, then  $\varphi\in\SaT[total](\{\Diamond,\Box,\dna,\at\}, \CloneM)$ and $K_1,g_1,w_1 \models \varphi$.

    For the inductive step, assume that the claim holds for all subformulae of $\varphi$.
    \begin{itemize}
      \item $\varphi = \Box \psi$ ($\varphi = \Diamond \psi$).
         For total frames, $\Box \psi$ ($\Diamond \psi$) is satisfiable if and only if $\psi$ is satisfiable.
         By induction hypothesis, this is equivalent to $K_1,g_1,w_1\models \psi$,
         and by the Claim this is equivalent to $K_1,g_1,w_1\models \Box \psi$ ($K_1,g_1,w_1\models\Diamond \psi$).
      \item $\varphi = \dna x. \psi$.
       Then $\varphi\in \SaT[total](\{\Diamond,\Box,\dna,\at\}, \CloneM)$ if and only if 
        there exists $K,g,w$ such that $K,g^x_w,w \models \psi$.
        By induction hypothesis and by the Claim this is equivalent to $K_1,g_1,w_1\models \dna x. \psi$.
      \item The remaining cases for hybrid operators follow similarly.

      \item $\varphi = c(\psi_1,\ldots, \psi_n)$ with $c \in \CloneM$.
       Assume that $\varphi$ is satisfied by some Kripke structure $K$ under assignment $g$ in state $w$. 
       By induction hypothesis we obtain:
        if $K,g,w \models \psi_i$ then $K_1,g_1,w_1 \models \psi_i$.
       Since $c \in \CloneM$, it follows that $K,g,w~\models~c(\psi_1,\ldots,\psi_n)$ if and only if $K_1,g_1,w_1 \models c(\psi_1,\ldots, \psi_n)$.
          \end{itemize}
    This shows that deciding $\varphi \in \SaT[total](\{\Diamond,\Box,\dna,\at\}, \CloneM)$ is equivalent to deciding whether $K_1,g_1,w_1\models\varphi$. In order to decide the latter,
    all hybrid and modal operators of $\varphi$ can be ignored, as $K_1$ is a singleton model.
    Thus deciding $K_1,g_1,w_1\models\varphi$\linebreak is equivalent to the evaluation problem for propositional $\CloneM$-formula, 
    which is\linebreak[4] $\NC1$-complete under $\leqcd$-reductions \cite{sch07a}.

    The same arguments apply for $\SaT[ER](\{\Diamond,\Box,\dna,\at\}, \CloneM)$.
  \end{proof}

The proof of Theorem~\ref{thm:hl-sat-total,er-all-m}, shows that deciding $\varphi \in \SaT[total](\{\Diamond,\Box,\dna,\at\}, \CloneM)$
is equivalent to deciding whether $K_1,g_1,w_1\models\varphi$, for the singleton reflexive model $K_1$ mapping all atomic propositions into state $w_1$. In order to decide the latter, all hybrid and modal operators of $\varphi$ can be ignored, as $K_1$ is a singleton model.
There, only the treatment of the $\Box$-operator depends on the transition relation being total or an equivalence relation. 
If this operator is not allowed, the same argumentation goes through for our other frame classes, too. 

  \begin{corollary} \label{cor:hl-sat-all,trans-Dia,dna,at-m}
    $\SaT(O,\!\!\;B)$ and $\SaT[trans](O,\!\!\;B)$
    are $\NC1\!\!\;$-complete for $O \subseteq \{\Diamond,\!\!\;\dna,\!\!\;\at\}$ and $\CloneS_{11} \subseteq [B] \subseteq \CloneM$.

  \end{corollary}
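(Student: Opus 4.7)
The plan is to adapt the argument of Theorem~\ref{thm:hl-sat-total,er-all-m} almost verbatim, exploiting the fact that the only place in that proof where the totality of the frame (or its being an equivalence relation) was used was the $\Box$-case of the induction. Since the current statement excludes $\Box$ from the operator set $O$, the same reduction of $\HL(O,B)$-satisfiability to evaluation in the singleton reflexive model $K_1=(\{w_1\},\{(w_1,w_1)\},\eta_1)$ (with $\eta_1(p)=\{w_1\}$ for all $p\in\PROP\cup\NOM$) will go through for the classes $\mathsf{all}$ and $\mathsf{trans}$ as well.

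Concretely, I first observe that $K_1$ is transitive (and of course an arbitrary frame), so whenever $K_1,g_1,w_1\models\varphi$, the formula $\varphi$ is satisfiable over both $\mathsf{all}$ and $\mathsf{trans}$. For the converse direction, I would prove by induction on $\varphi\in\HL(O,\CloneM)$ with $O\subseteq\{\Diamond,\dna,\at\}$ that if $\varphi$ is satisfied in some Kripke structure $K$ at some state $w$ under some assignment $g$, then $K_1,g_1,w_1\models\varphi$. The atomic case and the Boolean case (using monotonicity of $c\in\CloneM$ in exactly the way the original proof does) are unchanged. For $\Diamond\psi$: if $K,g,w\models\Diamond\psi$ then $K,g,w'\models\psi$ for some $w'$; the induction hypothesis yields $K_1,g_1,w_1\models\psi$, and then the Claim from the proof of Theorem~\ref{thm:hl-sat-total,er-all-m} gives $K_1,g_1,w_1\models\Diamond\psi$. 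The cases for $\dna x.\psi$ and $\at_t\psi$ are handled identically, again combining the induction hypothesis with the Claim. The $\Box$-case, which was the step that required totality or the ER property, simply does not arise.

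Having reduced satisfiability to evaluation of $\varphi$ in the singleton model $K_1$, all modal and hybrid operators can be stripped off (by the Claim applied repeatedly), reducing the problem to evaluating a propositional $\CloneM$-formula, which is $\NC1$-complete under $\leqcd$-reductions by \cite{sch07a}. This gives both the upper and the matching lower bound, the latter simply by the standard embedding of the monotone Boolean formula value problem into $\SaT(O,B)$ and $\SaT[trans](O,B)$.

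There is no substantial obstacle here; the work is really bookkeeping to confirm that excising the $\Box$-case from the earlier proof leaves every other step intact, and that the resulting reduction is a $\leqcd$-reduction. The only mildly delicate point worth double-checking is that the argument for $\Diamond$ in the inductive step does not secretly use totality --- and indeed it does not, because we only need that $\Diamond\psi$ holding in $K$ supplies a witness state where $\psi$ holds, which is exactly the semantics of $\Diamond$ over any frame.
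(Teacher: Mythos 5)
Your proposal is correct and matches the paper's own argument: the paper derives this corollary exactly by observing that in the proof of Theorem~\ref{thm:hl-sat-total,er-all-m} only the $\Box$-case used totality (or the ER property), so with $O \subseteq \{\Diamond,\dna,\at\}$ the same reduction to evaluation in the singleton reflexive model $K_1$ goes through over arbitrary and transitive frames, giving $\NC1$-completeness via the monotone formula value problem. No gaps.
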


\subsubsection{The clones \CloneV and \CloneE}

If we consider conjunction or disjunction only separately, the complexity of formula evaluation decreases from $\NC1$-complete to below $\AC0$. As the complexity for the ``easy cases'' for $\CloneM$ was determined by this complexity (Theorem~\ref{thm:hl-sat-total,er-all-m} and Corollary~\ref{cor:hl-sat-all,trans-Dia,dna,at-m}), the following results are not too surprising.

  \begin{theorem} \label{thm:hl-sat-cyclic-all-e+v}
    $\SaT[total](\{\Diamond,\Box,\dna,\at\},B)$ and $\SaT[ER](\{\Diamond,\Box,\dna,\at\},B)$ are in $\AC0$ for $[B] \subseteq \CloneE$ or $[B] \subseteq \CloneV$.
  \end{theorem}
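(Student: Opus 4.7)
The plan is to leverage the framework from the proof of Theorem~\ref{thm:hl-sat-total,er-all-m}. Since both $\CloneE$ and $\CloneV$ are subclones of $\CloneM$, the argument there already yields that for any $\varphi\in\HL(\{\Diamond,\Box,\dna,\at\},B)$ with $[B]\subseteq\CloneM$, the formula $\varphi$ is satisfiable over total (respectively ER) frames if and only if $K_1,g_1,w_1\models\varphi$, where $K_1$ is the reflexive singleton model with all atoms mapped to $w_1$. Moreover, the Claim used there shows that every modal or hybrid operator is transparent on $K_1$: $K_1,g_1,w_1\models\lambda\psi$ iff $K_1,g_1,w_1\models\psi$. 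Thus the task reduces to deciding $K_1,g_1,w_1\models\varphi$ in $\AC0$ when the Boolean base sits below $\CloneE$ or $\CloneV$.

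For $[B]\subseteq\CloneE$ I would establish by structural induction that $K_1,g_1,w_1\models\varphi$ if and only if the symbol $\false$ does not occur anywhere in $\varphi$. The base case is immediate: atoms and the constant $\true$ hold in $K_1$ while $\false$ does not. For modal and hybrid operators the equivalence propagates via the transparency Claim; for a Boolean connective $c\in\CloneE$, which is either a constant or an $n$-ary conjunction, the step uses the fact that a conjunction is true iff every argument is, so the absence of $\false$ at the leaves propagates to truth at the root. Deciding whether the input formula contains any $\false$-symbol is a trivial syntactic test, performable by a constant-depth, polynomial-size circuit, hence in $\AC0$.

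For $[B]\subseteq\CloneV$ the analogous induction would show that $K_1,g_1,w_1\models\varphi$ if and only if $\varphi$ contains at least one leaf distinct from $\false$, i.e.\ either an atom or the constant $\true$. The only change from the $\CloneE$ case is that the inductive step for connectives uses the dual fact: a disjunction is true iff at least one disjunct is. This check is again a syntactic scan, clearly in $\AC0$.

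The only point requiring careful checking will be the inductive step for the Boolean connectives, where I need to rely on the structural characterisation of $\CloneE$ and $\CloneV$ in Table~\ref{tab:clones}: every function in these clones really is a constant or an $n$-ary AND, respectively OR, of its arguments. Once this is observed, no real obstacle remains; both the total and ER frame classes are handled uniformly, since the reduction to truth in the singleton model established in Theorem~\ref{thm:hl-sat-total,er-all-m} applies to both.
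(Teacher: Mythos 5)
Your first step coincides with the paper's: since $\CloneE,\CloneV\subseteq\CloneM$, the argument of Theorem~\ref{thm:hl-sat-total,er-all-m} reduces satisfiability over total and ER frames to evaluating $\varphi$ in the singleton reflexive model $K_1$, with all modal and hybrid operators ignored. The paper then finishes by citing Schnoor \cite{sch07a}: evaluation of $\CloneE$-formulae (resp.\ $\CloneV$-formulae) is in $\NLOGTIME$ (resp.\ $\co\NLOGTIME$), hence in $\AC0$. You instead try to replace this citation by a direct syntactic criterion, and this is where your proof has a genuine gap.

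The theorem is stated for arbitrary finite $B$ with $[B]\subseteq\CloneE$ or $[B]\subseteq\CloneV$, so the connectives occurring in $\varphi$ are arbitrary functions from these clones, not just the literal symbols $\land$, $\lor$, $\false$, $\true$. Such functions may have \emph{fictive} arguments (they are constants, or conjunctions/disjunctions of a proper subset of their arguments). Your criteria are then wrong in both directions. For $\CloneE$: take $c(x,y,z)=x\land y\in\CloneE$ and $\varphi=c(p,q,\false)$; it contains the symbol $\false$ but is true in $K_1$, so your test wrongly declares it unsatisfiable. For $\CloneV$: take $c(x,y,z)=x\lor y\in\CloneV$ and $\varphi=c(\false,\false,p)$; it has a leaf distinct from $\false$ but is false in $K_1$. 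To repair this you would have to restrict attention to leaves reachable from the root through non-fictive argument positions (and handle constant-valued connectives), and then it is no longer a ``trivial syntactic scan'': deciding which leaf occurrences are relevant requires recovering the scope/ancestor structure of the formula string, and showing that this evaluation can still be done within $\AC0$ is exactly the nontrivial content of the cited $\NLOGTIME$/$\co\NLOGTIME$ bounds for the $B$-formula value problem. So either invoke \cite{sch07a} as the paper does, or restrict your argument explicitly to the standard bases and add a separate (non-obvious) reduction from general $B$ to them.
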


  \begin{proof} 
  As $\CloneV \subseteq \CloneM$ and $\CloneE \subseteq \CloneM$, the hybrid operators $\dna$ and $\at_x$ may be ignored as in the proof of Theorem~\ref{thm:hl-sat-total,er-all-m}. It has been shown by Schnoor~\cite{sch07a}, that evaluation of $\CloneV$-formulae ($\CloneE$-formulae) is in $\co\NLOGTIME$ ($\NLOGTIME$, resp.). Since $\NLOGTIME\cup\co\NLOGTIME\subseteq\AC0$, the theorem applies.
  \end{proof}
We obtain the following result from Theorem~\ref{thm:hl-sat-cyclic-all-e+v}.
	
  \begin{corollary} \label{cor:hl-sat-all,trans-Dia,dna,at-e+v}
    $\SaT(O, B)$ and $\SaT[trans](O, B)$
    are in $\AC0$ for $O \subseteq \{\Diamond,\!\!\;\dna,\!\!\;\at\}$ and $[B] \subseteq \CloneE$ or $[B] \subseteq \CloneV$.
  \end{corollary}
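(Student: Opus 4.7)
The plan is to show that the singleton reflexive model argument from the proof of Theorem~\ref{thm:hl-sat-total,er-all-m} extends, in the absence of $\Box$, from the frame classes $\mathsf{total}$ and $\mathsf{ER}$ to all four frame classes considered. Once this is established, membership in $\AC0$ follows exactly as in Theorem~\ref{thm:hl-sat-cyclic-all-e+v}.

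First, let $K_1 = (\{w_1\},\{(w_1,w_1)\},\eta_1)$ with $\eta_1(p) = \{w_1\}$ for all $p \in \PROP \cup \NOM$, and let $g_1$ map every variable to $w_1$. Note that $K_1$ is contained in each of $\mathsf{all}$, $\mathsf{trans}$, $\mathsf{total}$, and $\mathsf{ER}$. Then I would prove, by induction on the structure of $\varphi \in \HL(\{\Diamond,\dna,\at\},\CloneM)$, that $\varphi$ is satisfiable over $\mathsf{all}$ (equivalently, over $\mathsf{trans}$) iff $K_1,g_1,w_1 \models \varphi$. The cases for $\dna$, $\at$, and monotone Boolean operators $c \in \CloneM$ are identical to those in the proof of Theorem~\ref{thm:hl-sat-total,er-all-m}: hybrid operators collapse on a singleton model, and monotonicity lets a satisfying assignment in any model be pulled back to one in $K_1$. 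The only case that used totality there was $\Box\psi$, which was handled by noting that over total frames, $\Box\psi$ is satisfiable iff $\psi$ is. With $\Box$ excluded from $O$, this case never arises, and the single remaining modal case $\varphi = \Diamond\psi$ is straightforward in every frame class: if $K,g,w \models \Diamond\psi$, then $\psi$ is satisfiable, so by induction $K_1,g_1,w_1 \models \psi$, and the Claim from the proof of Theorem~\ref{thm:hl-sat-total,er-all-m} gives $K_1,g_1,w_1 \models \Diamond\psi$; the converse uses $w_1 R_{K_1} w_1$.

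Since $[B] \subseteq \CloneE \subseteq \CloneM$ or $[B] \subseteq \CloneV \subseteq \CloneM$, this reduces both $\SaT(O,B)$ and $\SaT[trans](O,B)$ to the evaluation problem for propositional $\CloneE$-formulae, respectively $\CloneV$-formulae, on the singleton model $K_1$ (all $\Diamond$, $\dna$, $\at$ can be stripped away since $K_1$ has a single state). By Schnoor~\cite{sch07a} these evaluation problems lie in $\NLOGTIME$ respectively $\co\NLOGTIME$, both of which are contained in $\AC0$, yielding the desired upper bound.

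The only mild subtlety I expect is bookkeeping in the inductive case for $\Diamond$ over $\mathsf{all}$ and $\mathsf{trans}$: unlike for $\mathsf{total}$, a satisfying state need not have a successor, but this only matters when $\Diamond$ occurs positively in a satisfied subformula, in which case a successor \emph{does} exist by semantics of $\Diamond$, so the induction hypothesis applies. No frame-class-specific closure is needed because $K_1$ is already reflexive and hence transitive. This is why the argument collapses neatly to an $\AC0$ upper bound uniformly across $\mathsf{all}$ and $\mathsf{trans}$.
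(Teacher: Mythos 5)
Your proposal is correct and follows essentially the same route as the paper: the paper's own justification is precisely that the only place the proof of Theorem~\ref{thm:hl-sat-total,er-all-m} used totality (or reflexivity) was the $\Box$ case, so with $O \subseteq \{\Diamond,\dna,\at\}$ the reduction of satisfiability over $\mathsf{all}$ and $\mathsf{trans}$ to evaluation on the singleton reflexive model $K_1$ goes through, after which Schnoor's $\NLOGTIME$/$\co\NLOGTIME$ bounds for $\CloneE$-/$\CloneV$-formula evaluation give membership in $\AC0$ exactly as in Theorem~\ref{thm:hl-sat-cyclic-all-e+v}. You merely spell out the $\Diamond$ induction step that the paper leaves implicit; no substantive difference.
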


This result is optimal in the sense that including all modal and hybrid operators we immediately get \LOGSPACE lower bounds from Theorem \ref{thm:hl-sat-Dia,Box,dna,at-I}. 

For the case of conjunctions, 
this result can be improved. Considering arbitrary frames, a \coNP lower bound is already known for the modal satisfiability problem.
\begin{theorem}[\cite{bhss05b,dhlnns92}]
    $\SaT(\{\Diamond,\Box\}, B)$ is $\coNP$-hard for $[B] \supseteq \CloneE_0$.
\end{theorem}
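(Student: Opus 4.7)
The plan is to reduce from a canonical $\coNP$-complete problem, namely propositional CNF unsatisfiability, in the spirit of the classical $\coNP$-hardness proofs for restricted description and modal logics \cite{dhlnns92,bhss05b}. As a preliminary reduction, it suffices to handle the minimal case $[B] = \CloneE_0$ with base $\{\land, \false\}$: any larger $B$ with $[B] \supseteq \CloneE_0$ then follows by the standard polynomial-time rewriting argument of \cite{lew79,sch07a} that expresses $\land$ and $\false$ in terms of an arbitrary base of $[B]$, exactly as used in the extension step of the proof of Theorem~\ref{thm:hl-sat-Dia,Box,dna,at-m}. Hence, it remains to construct, for every CNF formula $\varphi = C_1 \land \cdots \land C_m$ over $x_1,\ldots,x_n$, an $\HL(\{\Diamond,\Box\}, \{\land, \false\})$-formula $\psi_\varphi$ of polynomial size such that $\varphi$ is unsatisfiable iff $\psi_\varphi$ is satisfiable.

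The crucial technical tool is the subformula $\Box \false$, which holds at a state exactly when that state has no $R$-successor. In a Boolean fragment lacking negation and disjunction, this gives the only available way of distinguishing two kinds of states, and it is precisely what lets one simulate negative literal occurrences. The construction builds a shallow tree-shaped Kripke model in which the structural pattern of accessible states below the root encodes a truth assignment to $x_1, \ldots, x_n$: informally, for each variable $x_i$ one designates a $\Diamond$-successor whose status as a dead-end (detectable by $\Box\false$) or not encodes the value of the assignment at $x_i$. The formula $\psi_\varphi$ is then a conjunction of clause-gadgets, each of which, evaluated at the root, can hold only if the encoded assignment falsifies the corresponding clause; by placing the quantification over assignments under an outer $\Box$, the requirement is propagated uniformly to all encoded assignments, so that the whole formula expresses that every assignment falsifies some clause, i.e.\ that $\varphi$ is unsatisfiable.

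The main obstacle is precisely the asymmetry of the Boolean fragment: although $\{\land,\false\}$ together with $\Diamond,\Box$ suffices, every disjunction appearing in the target encoding must be introduced implicitly through modal branching (so a literal-level alternative $\alpha\lor\beta$ becomes $\Diamond\alpha' \land \Diamond\beta'$ paired with suitable $\Box$-constraints), and every negation must be introduced via $\Box\false$, with the clause-gadgets needing to interact coherently through the shared variable-witnesses. Correctness is then established by the usual two directions: from a satisfying Kripke model of $\psi_\varphi$ one reads off, via the $\Box\false$-pattern of successors, a truth assignment falsifying some clause in every branch, certifying unsatisfiability of $\varphi$; conversely, any proof of unsatisfiability of $\varphi$ is turned into an explicit polynomial-size tree model validating $\psi_\varphi$. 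The construction is logtime-uniform, so it provides the required $\leqcd$-reduction.
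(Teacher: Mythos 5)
There is a genuine gap here. First, note that the paper does not prove this statement at all: it is imported from the literature (\cite{dhlnns92,bhss05b}), so there is no in-paper proof to compare against, and your task is effectively to reprove a known result of Donini et al. Your text, however, is a plan rather than a proof: the reduction formula $\psi_\varphi$ is never written down, no variable or clause gadget is specified, and both correctness directions are only asserted ("it is established by the usual two directions"). The part you leave open is exactly the hard part. With only $\land$, $\false$, $\Diamond$, $\Box$ at your disposal, the coNP-hardness must come from forcing \emph{every} model of $\psi_\varphi$ to implicitly contain all $2^n$ assignments and from making each of them falsify a clause; "placing the quantification over assignments under an outer $\Box$" does not achieve this -- a single $\Box$ quantifies over successors that the formula itself must first force to exist, so one needs a nested alternation of $\Diamond$-conjuncts under $\Box$'s whose exponentially many \emph{traces} play the role of assignments. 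Moreover, your proposed bit-encoding is internally inconsistent as described: if the witness state for $x_i$ is a dead end (i.e.\ satisfies $\Box\false$), then no further $\Diamond$-requirements can be imposed below it, so you cannot both encode a bit by dead-end vs.\ non-dead-end and continue building the assignment structure underneath that witness. The known constructions avoid this by letting the contradiction arise when a trace simultaneously forces $\Box\false$ and a $\Diamond$-obligation at the same state; your sketch neither gives this mechanism nor shows how the clause gadgets "interact coherently", which is precisely where the work lies. The incidental claim that unsatisfiability of $\varphi$ yields a \emph{polynomial-size} tree model of $\psi_\varphi$ is also unsubstantiated (and in tension with your own requirement that all assignments be represented); it is fortunately not needed for the reduction.

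The outer frame of your argument is fine: reducing from CNF-unsatisfiability (or DNF tautology) is the right shape for a many-one coNP-hardness proof, and lifting from the base $\{\land,\false\}$ to arbitrary $B$ with $[B]\supseteq\CloneE_0$ via the rewriting of \cite{lew79,sch07a} mirrors what the paper itself does in the proof of Theorem~\ref{thm:hl-sat-Dia,Box,dna,at-m} and is unobjectionable. But until you exhibit $\psi_\varphi$ explicitly and prove both directions of "$\varphi$ unsatisfiable iff $\psi_\varphi$ satisfiable", this remains an outline of intent, not a proof of the theorem.
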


As before, the proof of this result does not generalize to transitive frames. Here, we are able to show \NL-hardness. 

  \begin{theorem} \label{thm:hl-sat-trans-all-e}
    $\SaT[trans](\{\Diamond,\Box,\dna,\at\},B)$ is $\NL$-hard for $[B] \supseteq \CloneE_0$.
  \end{theorem}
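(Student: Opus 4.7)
I plan to reduce from the complement of $s$-$t$-reachability in directed graphs to $\SaT[trans](\{\Diamond,\Box,\dna,\at\},\{\land,\false\})$. Since $\NL$ is closed under complement (Immerman--Szelepcs\'enyi), non-reachability is $\NL$-complete and hence this reduction yields the desired lower bound; the passage from the base $\{\land,\false\}$ to any $B$ with $[B]\supseteq\CloneE_0$ then works exactly as in the proof of Theorem~\ref{thm:hl-sat-Dia,Box,dna,at-m}.

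The crux is an encoding of implication via $\Box$. Although the clone $\CloneE_0$ offers neither negation nor disjunction, I observe that for any \emph{global} formula $\psi$---one whose truth in a given model is independent of the state of evaluation---the formula $\at_{n_u}\Box\psi$ is true exactly when $n_u$ has no successor or $\psi$ holds. Writing $\true$ for $\dna x.x$ (which is true at every state), the formula $\at_{n_v}\Diamond\true$ is global (it just says ``$n_v$ has a successor''), so $\at_{n_u}\Box\at_{n_v}\Diamond\true$ encodes the implication ``$n_u$ has a successor $\Rightarrow n_v$ has a successor.'' I identify ``vertex $v$ is reached from $s$'' with ``$n_v$ has a successor in the model.''

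Given $G=(V,E)$ with $s,t\in V$ and distinct nominals $n_v$ for each $v\in V$, I define
\[
\varphi_G \;:=\; \at_{n_s}\Diamond\true ~\land~ \bigwedge_{(u,v)\in E}\at_{n_u}\Box\at_{n_v}\Diamond\true ~\land~ \at_{n_t}\Box\false.
\]
The three conjuncts state, respectively, that $s$ is reached, that reachedness propagates along every graph edge, and that $t$ is not reached. For the $(\Leftarrow)$ direction, if $s\not\leadsto t$ in $G$, I satisfy $\varphi_G$ via the Kripke structure with states $\{n_v:v\in V\}\cup\{n^*\}$ and relation $\{(n_v,n^*):s\leadsto v\text{ in }G\}$; this is trivially transitive since $n^*$ is a sink, and each conjunct is then verified directly. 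For $(\Rightarrow)$, let $K$ be a transitive model of $\varphi_G$ and $V^*:=\{v\in V:n_v\text{ has a successor in }K\}$. The three conjuncts force $s\in V^*$, closure of $V^*$ under $E$-successors of $G$, and $t\notin V^*$; induction along any $s$-to-$t$-path in $G$ then contradicts $s\leadsto t$.

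The formula has size polynomial in $|V|+|E|$ and is clearly $\leqcd$-computable from $G$. The main subtlety to verify carefully is the globality claim underlying the $\Box$-encoded implication, which reduces to unfolding the semantics of $\at_{n_u}\Box\at_{n_v}\Diamond\true$ and observing that the inner $\at_{n_v}$ makes the body's truth value independent of the state reached through $\Box$.
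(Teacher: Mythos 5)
Your proposal is correct, but it takes a genuinely different route from the paper's proof. The paper reduces from unreachability in \emph{acyclic} digraphs: the conjunct $\Diamond^{n}\true \land \Box^{n+1}\false$ forces a chain of at least $n+1$ states, each vertex $i$ gets a state variable $x_i$ bound (via $\dna$ and $\at_r\Diamond$) to a successor of a root $r$, the conjuncts $\at_{x_i}\Diamond x_j$ for $(i,j)\in E$ make the state order a topological order of the graph (transitivity is essential there, since $\Diamond$ must be readable as ``strictly later in the chain''), and $\at_{x_t}\Diamond x_s$ asserts that $t$ precedes $s$ in some topological order. You instead reduce from non-reachability in arbitrary digraphs, encode ``$v$ is reached'' as ``the state named $n_v$ has a successor'', and obtain the needed implication, despite the absence of $\lor$ and $\neg$ in $\CloneE_0$, from the vacuous truth of $\Box$ applied to the global formula $\at_{n_v}\Diamond\true$; your witness model (all named states pointing to a single unnamed sink) is vacuously transitive, and in fact your argument nowhere uses transitivity, so it gives $\NL$-hardness uniformly for every frame class containing such sink models (in particular $\mathsf{all}$ and $\mathsf{trans}$, though not $\mathsf{total}$ or $\mathsf{ER}$, where $\Box\false$ is unsatisfiable). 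Your version is shorter, avoids the acyclicity and topological-ordering machinery, works with nominals where the paper works with state variables, and uses $\dna$ only to write $\true$. Two details are worth stating explicitly: you rely on the convention that $s$ reaches itself (paths of length zero); under the length-at-least-one convention you should add $(n_s,n^{\ast})$ to the relation and assume $s\neq t$. And the transfer from the base $\{\land,\false\}$ to arbitrary $B$ with $[B]\supseteq\CloneE_0$ does go through as you claim, via $[B\cup\{\true\}]\supseteq\CloneE$ and replacing $\true$ by $\at_z z$ exactly as in the proof of Theorem~\ref{thm:hl-sat-Dia,Box,dna,at-m} (a step the paper itself leaves implicit here).
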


  \begin{proof}
    We give a reduction from the unreachability problem for acyclic graphs. 
    For acyclic graphs, there is a path from $s$ to $t$ if and only if $s$ appears before $t$ 
    in every topological sorting of the nodes.
    We make use of the negation of this statement.

    Let $G=(V,E,s,t)$ be an instance of unreachability for acyclic digraphs with $V=\{1,\ldots,n\}$.
    We construct a $\HL(\{\Diamond,\Box,\dna,\at\}, \{\wedge,\false\})$-formula $\varphi = \varphi_1 \wedge \varphi_2$ 
        that is satisfiable if and only if $G$ has no path from $s$ to $t$. 

        We will use a state variable $x_i$ for each node $i$ and an additional state variable $r$ that will be used for the ``root state''.
        The first part of $\varphi$ is used to guarantee that every Kripke structure $K$ that satisfies $\varphi$
        consists of at least $n+1$ states with an acyclic transition relation.
        \[
          \varphi_1 = \Diamond^{n}\true \land \Box^{n+1} \false
        \]

        The second part of $\varphi$ gives names to the states, such that the order of the states reflects a topological
        order of the corresponding nodes of the graph, and checks whether $s$ is behind $t$.
        \[
          \varphi_2 = \dna r.\at_r\Diamond\dna x_1. \cdots \at_r\Diamond\dna x_n. \bigwedge_{(i,j)\in E}( \at_{x_i} \Diamond x_j)\land\at_{x_t}\Diamond x_s
        \]
        Note that it does not matter here if more than one variable is assigned to one state as long as the edge-relation is respected.

        It is not too hard to see that $\varphi$ is satisfiable (over a transitive frame) if and only if there exists a topological ordering of $V$ in which $t$ appears before $s$. 
    Consequently, $\SaT[trans](\{\Diamond,\Box,\dna,\at\}, \CloneE_0)$ is $\NL$-hard.
  \end{proof}

We conjecture that all lower bounds provided in this section (except, perhaps, the last one) are optimal. 
Nevertheless, matching upper bounds are missing.

\begin{theorem}
  $\SaT(\{\Diamond,\Box,\dna,\at\},\CloneV)$ is in \LOGSPACE.
\end{theorem}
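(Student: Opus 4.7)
The approach is to reduce $\SaT(\{\Diamond,\Box,\dna,\at\}, \CloneV)$ to a structural check on the formula's syntax tree, computable in $\LOGSPACE$. Since $\CloneV$ contains only disjunction besides the constants, the syntax tree branches only at $\lor$-nodes; all other internal nodes are unary modal or hybrid operators. A satisfying model for $\varphi$ therefore corresponds to picking one child at each $\lor$-node, yielding a \emph{chain} $\mathcal{O}_1 \cdots \mathcal{O}_k \ell$ of unary operators terminating at a leaf $\ell\in\ATOM\cup\{\true,\false\}$; hence $\varphi$ is satisfiable iff some such chain admits a tree-shaped Kripke structure realising its commitments.

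The heart of the proof is a characterisation of chain satisfiability. I would trace the chain's symbolic current state: $\Diamond$ moves to a fresh successor (committing the previous state to have a successor), $\Box$ can be used vacuously (committing the current state to be a dead-end and truncating the rest) or be treated like $\Diamond$, $\dna x.$ identifies the symbolic name $x$ with the current state, and $\at_t$ jumps to the state named by $t$. The chain is satisfiable iff some choice at each $\Box$ yields mutually consistent commitments: after quotienting the symbolic state names by the equivalences induced by $\dna$ and $\at$, no equivalence class is simultaneously dead-end-committed and has-successor-committed; and if $\ell=\false$, at least one $\Box$ in the chain must be used vacuously. For instance, $\dna x.\,\Diamond\at_x\Box\false$ is unsatisfiable because the state bound to $x$ is forced both to have a successor (from $\Diamond$) and to be a dead-end (from the terminal $\Box\false$), no matter which option is chosen at the $\Box$; this back-reference phenomenon is what rules out the naive characterisation ``$\varphi$ is satisfiable iff the leaves reachable from the root via $\lor$-choices include some non-$\false$ symbol or a $\Box$-subformula''.

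The full check can then be carried out in $\LOGSPACE$ by walking the syntax tree with $O(\log|\varphi|)$-bit pointers. The $\lor$-tree is evaluated by a recursive procedure returning a single bit per subtree, which amounts to monotone Boolean formula evaluation and so lies in $\NC1\subseteq\L$. For each chain reached as a leaf of this evaluation, the consistency analysis is performed in logspace: a scan collects the identifications and the dead-end/has-successor commitments, the equivalence classes on the polynomially many symbolic state names are computed via the standard logspace algorithm for undirected reachability, and consistency is verified by a single final pass. The main obstacle is formulating the chain-consistency condition precisely---covering all back-references induced by $\dna$/$\at$ and handling the two options at each $\Box$ without enumerating the exponentially many option-combinations; I expect this to go through by showing that the option choice at each $\Box$ can be made greedily from local information, so that global consistency reduces to a single reachability-plus-aggregation check in $\L$.
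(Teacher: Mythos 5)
Your overall strategy is the same as the paper's: restrict $\varphi$ to the chain of unary operators leading to a single atom (the paper's $\varphi[a_i]$) and accept iff some chain is satisfiable. Two comments on that first step. First, the direction you treat as immediate---that a satisfying model ``corresponds to picking one child at each $\lor$-node''---is not automatic, because $\Box$ does not distribute over $\lor$: a disjunction occurring under a $\Box$ may be resolved differently at different successors, so one has to argue that a single global choice of disjuncts can nevertheless be extracted. The paper does this with a short tableau argument (an open, complete branch resolves each disjunction once and is itself a tableau for some $\varphi[a_i]$); your proposal asserts it without argument. Second, and decisively, the paper does not re-derive a satisfiability criterion for chains at all: it observes that each $\varphi[a_i]$ lies in $\HL(\{\Diamond,\Box,\dna,\at\},\CloneI)$ and invokes the logspace procedure of Theorem \ref{thm:hl-sat-total-all-n} (the $\CloneN$ upper bound over arbitrary frames) as a black box, so the only new work is the logspace bookkeeping for enumerating the $\varphi[a_i]$. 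You instead propose an ad hoc chain-consistency check and explicitly defer its precise formulation and correctness (``I expect this to go through \dots greedily'').

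That deferred step is a genuine gap, not a routine detail, and the check as you sketch it is unsound. Consider the pure chain $\dna x.\Diamond\Diamond\at_x\Box\Box\false$: it is unsatisfiable, since the two $\Diamond$'s force a successor $v$ of the state $w$ bound to $x$ and a successor of $v$, while $\Box\Box\false$ at $w$ forces every successor of $w$ to be a dead end. Your procedure, however, rejects the vacuous option at the first $\Box$ (it would clash with $w$'s has-successor commitment), treats it ``like $\Diamond$'' by moving to a fresh successor, makes the second $\Box$ vacuous there, and ends with no class both dead-end- and successor-committed---so it declares the formula satisfiable. The repair is exactly what is hard: a $\Box$ evaluated at a state that already has committed successors must be evaluated at those successors as well, so the criterion must be sensitive to the relative positions of the modalities, which is precisely what the order conditions in the proof of Theorem \ref{thm:hl-sat-total-all-n} (and its extra clause over arbitrary frames, cf.\ the example $\at_i\Diamond\at_i\Box\false$) encode. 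So as it stands your proof reduces the theorem to an unestablished characterization of $\CloneI$-chain satisfiability; you should either prove such a characterization---essentially redoing Theorem \ref{thm:hl-sat-total-all-n} for $\CloneI$---or cite that theorem, as the paper does.
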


\begin{proof}
  Let $\varphi \in \HL(\{\Diamond,\Box,\dna,\at\},\CloneV)$, and let $a_1,\dots,a_n$ be all atoms (i.e., occurrences of
  \ZERO, \ONE, atomic propositions, nominals or state variables) in $\varphi$. Given a subformula $\psi$ of $\varphi$
  and $i$ such that $a_i$ occurs in $\psi$, we define $\psi[a_i]$ to be the following formula.
  \begin{itemize}
    \item
      If $\psi = a_i$, then $\psi[a_i] = a_i$.
    \item
      If $\psi = \vartheta_1 \vee \vartheta_2$, then $\psi[a_i] = \vartheta_j[a_i]$, where $j$ is such that $\vartheta_j$ contains $a_i$.
    \item
      If $\psi = O \vartheta$, for $O \in \{\Diamond,\Box,\dna x, \at_x\}$, then $\psi[a_i] = O \vartheta[a_i]$.
  \end{itemize}
  Clearly, each $\psi[a_i]$ is in $\HL(\{\Diamond,\Box,\dna,\at\},\CloneI)$.

  The following claim yields a decision procedure for $\SaT(\{\Diamond,\Box,\dna,\at\},\CloneV)$ that is
  based on a procedure for $\SaT(\{\Diamond,\Box,\dna,\at\},\CloneI)$, which can be found in Theorem \ref{thm:hl-sat-total-all-n}.

  \begin{claim}
    $\varphi$ is satisfiable if and only if there is some $i = 1,\dots,n$ such that $\varphi[a_i]$ is satisfiable.
  \end{claim}

  Since $\SaT(\{\Diamond,\Box,\dna,\at\},\CloneI)$ is in \LOGSPACE, the claim enables us to decide
  $\SaT(\{\Diamond,\Box,\dna,\at\},\CloneV)$ in \LOGSPACE as follows: for each $a_i$, test whether
  $\varphi[a_i]$ is satisfiable. If one of these tests is positive, accept; otherwise reject. The administrational
  effort of traversing through all atoms and determining the respective $\varphi[a_i]$ involves only
  constantly many counters of logarithmic size and determining scopes of operators. Therefore,
  the whole decision procedure can be performed in logarithmic space.

  It remains to prove the claim.
  The ``$\Leftarrow$'' direction is obvious: if $\varphi[a_i]$ is satisfiable in a state $s$ of a
  Kripke structure $K$ under an assignment $g$, then $K,g,s$ also satisfies $\varphi$.
  For the ``$\Rightarrow$'' direction, we recall hybrid tableau techniques as described in \cite{bla00a}.
  If $\varphi$ is satisfiable, then there exists a tableau with $\at_i \varphi$ at its root, for a fresh nominal $i$,
  and an open and complete branch in this tableau. This branch by itself is a tableau for $\varphi[a_i]$,
  and therefore $\varphi[a_i]$ is satisfiable.
\end{proof}

	\subsection{Clones including negation}\label{subsec:negation}

Negation immediately limits the number of relevant satisfiability problems in two ways. First, as $\Box\varphi\equiv\neg\Diamond\neg\varphi$, we cannot exclude the $\Box$-operator and keep $\Diamond$ as we did for monotone clones. Therefore, we have to consider only two hybrid languages: with and without $\at$.
Second, as \true and \false are always expressible by $\dna x.x$ and $\neg\dna x.x$, we only need to consider clones with both constants. These are $\CloneN$ (only negation), $\CloneL$ (exclusive or), and $\CloneBF$ (all Boolean functions).

While we will completely classify all satisfiability problems based on $\CloneN$ and $\CloneBF$, we will not provide any specific results for $\CloneL$.

\subsubsection{Negation only}\label{subsubsec:negation}

The results for the satisfiability problems based on $\CloneN$ stick out from our other results, as $\CloneN$ is the only clone (besides those for which satisfiability is trivial) where all complexity results are the same for all frame classes we consider. We show that satisfiability for the hybrid language including $\at$ is \LOGSPACE-complete, while it is \ACp{0}{2}-complete for the language without $\at$.
%
We start with the lower bound for the latter result, which holds even in the absence of modal operators.
    \begin{theorem}
      \label{theo:all,trans,total,ER_N_dna}
      $\SaT[\Fclass{F}](\{\dna\},B)$ is  \ACp{0}{2}-hard for $[B]\supseteq \CloneN_2$
      and \allconsideredF.
    \end{theorem}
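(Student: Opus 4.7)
The plan is to reduce PARITY, a canonical $\ACp02$-complete problem under $\leqcd$, to $\SaT[\Fclass{F}](\{\dna\},B)$. Given a bit string $b = b_1\cdots b_n$, I would emit the formula
\[
  \varphi_b \;:=\; \dna x.\;\neg\,\neg^{b_1}\,\neg^{b_2}\cdots\neg^{b_n}\;x,
\]
where $\neg^{b_i}$ denotes one copy of $\neg$ if $b_i=1$ and nothing otherwise, so that the total number of negations on top of the atom $x$ is $1+\sum_i b_i$.

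For correctness I would argue as follows. In any Kripke structure $K=(W,R,\eta)$ with state $w$ and assignment $g$, we have $K,g,w\models\varphi_b$ iff $K,g^x_w,w\models\neg^{1+\sum b_i}x$. Since $g^x_w(x)=w$, the atom $x$ evaluates to true at $w$, and hence the formula holds iff $1+\sum_i b_i$ is even, i.e., iff $b$ has odd parity. This equivalence depends on neither the structure, the state, nor the assignment; moreover, the singleton reflexive model lies in all four frame classes of interest. Therefore $\varphi_b$ is satisfiable over $\Fclass{F}$ precisely when the parity of $b$ is odd, uniformly in $\Fclass{F}\in\{\mathsf{all},\mathsf{trans},\mathsf{total},\mathsf{ER}\}$.

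To see that $b\mapsto\varphi_b$ is a $\leqcd$-reduction, I would observe that each position of the output depends on at most a single input bit: the prefix $\dna x.\,\neg$ and the trailing $x$ are constant, and the $i$-th interior slot is ``$\neg$'' iff $b_i=1$. This is trivially realized by a logtime-uniform $\AC^0$ circuit of polynomial size. In the case $[B]\supsetneq\CloneN_2$, the hypothesis $\neg\in[B]$ lets me replace each $\neg$ in $\varphi_b$ by a constant-size $B$-term computing negation; the resulting $\HL(\{\dna\},B)$-formula has size $O(n)$ and the same satisfiability behaviour.

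The main obstacle is conceptual rather than technical: one must notice that $\dna x$ together with a stack of negations applied to the bound variable already counts negations modulo $2$, so parity can be read off directly from satisfiability. Once this is seen, the reduction essentially writes itself, and the uniform treatment of all four frame classes comes for free because the truth value of $\varphi_b$ is independent of any frame-theoretic property.
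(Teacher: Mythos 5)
Your proposal is correct and is essentially the paper's own argument: a reduction from PARITY that outputs $\dna x.$ followed by a stack of negations applied to the bound variable $x$, whose satisfiability (independent of the frame class) records the number of negations modulo $2$. The only deviation is your extra leading negation, which merely swaps the even/odd convention and is immaterial since PARITY and its complement are both $\ACp{0}{2}$-hard.
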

    \begin{proof}
    We give a reduction from the \ACp{0}{2}-complete problem \PARITY. 

    Let $a:=a_1\cdots a_n\in \{0,1\}^n$ be an instance for \PARITY. Then the following $\HL(\{\dna\},\CloneN_2)$-formula is satisfiable if and only if $a$ contains an even number of ones.
    \[\varphi_a := \dna x.b_1\dots b_n x,\]
    where $b_i=\neg$ if $a_i=1$ and the empty string otherwise. That is, the number of negations in $\varphi_a$ is exactly the number of ones in $a$. As the satisfiability of $\varphi_a$ does not depend on the transition relation, the theorem follows.
    \end{proof}

A matching upper bound is provided by the following theorem.
    \begin{theorem}
      \label{theo:all,trans,total,ER_N_Dia,Box,dna}
      $\SaT[\Fclass{F}](\{\Diamond,\Box,\dna\},B)$ is \ACp{0}{2}-complete
      for $[B] \subseteq \CloneN$ 
      and \allconsideredF.
    \end{theorem}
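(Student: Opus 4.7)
The lower bound is provided by Theorem~\ref{theo:all,trans,total,ER_N_dna}, so my plan focuses on the matching upper bound. The key is to exploit the strong syntactic restriction imposed by $[B] \subseteq \CloneN$: since every function in $\CloneN$ is either a constant or depends on at most one argument, every formula of $\HL(\{\Diamond,\Box,\dna\},B)$ is a linear chain $\varphi = O_1 O_2 \cdots O_k \ell$ with each $O_i \in \{\neg, \Diamond, \Box, \dna x.\}$ and a single terminal $\ell \in \ATOM \cup \{\true, \false\}$.

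First I would normalize $\varphi$ by pushing all negations inward via the dualities $\neg\Diamond\psi \equiv \Box\neg\psi$, $\neg\Box\psi \equiv \Diamond\neg\psi$, and $\neg\dna x.\psi \equiv \dna x.\neg\psi$, obtaining an equivalent chain $\bar\varphi = O'_1 \cdots O'_m \ell'$ with each $O'_i \in \{\Diamond,\Box,\dna x.\}$ and $\ell' \in \{\true, \false\} \cup \{a, \neg a : a \in \ATOM\}$. Each $O'_i$ equals $O_i$ or its dual (with $\dna$ self-dual) according to the parity of negations strictly preceding $O_i$, and $\ell'$ is determined by the total parity of negations. Both parities are computable in constant depth with unbounded parity gates.

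Next I would classify $\ell'$. If $\ell'$ is $\true$, an atomic proposition or its negation, a nominal or its negation, a free state variable or its negation, or a positive bound state variable, then $\bar\varphi$ is trivially satisfiable in all four frame classes by a suitable one- or two-state model. The nontrivial cases are $\ell' = \false$ and $\ell' = \neg y$ where $y$ is bound by the latest occurrence of $\dna y.$ at some position $j \le m$. I claim the following frame-class-dependent characterizations. Over arbitrary and transitive frames, $\bar\varphi$ is satisfiable iff either some $O'_i = \Box$ (made vacuous at a state without successors) or, when $\ell' = \neg y$ is bound, some $O'_k = \Diamond$ with $j < k \le m$ (moving to a successor different from $w_j$). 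Over total frames, $\ell' = \false$ is unsatisfiable, and $\ell' = \neg y$ bound is satisfiable iff some $O'_k \in \{\Diamond,\Box\}$ occurs with $j < k \le m$. Over ER frames, $\ell' = \false$ is unsatisfiable, while $\ell' = \neg y$ bound is satisfiable iff the innermost non-$\dna$ operator in the suffix $O'_{j+1}\cdots O'_m$ is $\Diamond$; this is because, under reflexive successor-closure, the truth function $[u \neq w_j]$ collapses to the constant $\false$ under $\Box$ and to the constant $\true$ under $\Diamond$ (when the class has at least two elements), and constants are preserved by all further operators.

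Finally, each of these tests -- computing parities, locating the latest $\dna y.$ matching the terminal's variable, detecting $\Box$ or $\Diamond$ anywhere or in a designated suffix, and finding the innermost non-$\dna$ in the suffix -- is expressible by a logtime-uniform constant-depth circuit with parity gates, so the decision procedure lies in \ACp{0}{2}. I expect the ER case to be the main obstacle, as its correctness rests on an induction over the chain suffix showing how the three operator kinds act on the (non-)invariant truth function over the equivalence class; the other cases follow from direct model constructions paralleling those earlier in this section.
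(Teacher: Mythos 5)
Your proposal is correct and follows essentially the same route as the paper's proof: rewrite the formula into a single chain with one (possibly negated) terminal symbol via a negation-parity computation in \ACp{0}{2}, then decide satisfiability by a case analysis on the terminal and on the modal operators in the suffix after the last binding of the terminal's variable, with the ER case governed by whether the innermost modality in that suffix is $\Diamond$. If anything, your criterion for arbitrary and transitive frames is slightly more careful than the paper's sketch, since you add the disjunct that a $\Box$ anywhere in the chain can be made vacuous at a dead-end state (e.g.\ $\Box\dna x.\neg x$), a case the paper's ``clearly unsatisfiable'' remark glosses over.
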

    The key to the proof of this theorem is that a given formula can be transformed into negation normal form by an \ACp{0}{2}-circuit. Subsequently determining satisfiability is easy.

  \begin{proof}
      Let $\varphi \in \HL(\{\Diamond,\Box,\dna\},B)$. Before we consider the different frame classes, we observe that we can assume that $\varphi$ is of the form
      \begin{align*}
              (\{\Diamond,\Box\}\cup\{\dna x. \mid x\in\SVAR\})^\star\lambda,
      \end{align*}
      where $\lambda\in\{\lnot x, x, \lnot p, p, \lnot n, n, \true, \false\}$ for a variable $x\in\SVAR$, a proposition $p\in\PROP$, and a nominal $n\in\NOM$. In particular, there is at most one negation and if there is a negation, it is the second to last symbol of $\varphi$. If $\varphi$ is not given in this form, an equivalent formula can be computed by simply drawing negations inward, replacing operators by their duals and eliminating double negations. More precisely, we have to compute for every operator in $\varphi$ whether it is in the scope of an odd or even number of negations. This can be done by an \ACp{0}{2}-circuit.

      Now, assume that $\varphi$ is of the form described above, and let us distinguish two cases.
      First, assume that $\lambda$ is not the negation of a bound state variable.
      In this case, it is rather trivial to determine satisfiability of~$\varphi$: 
      over total or ER frames, $\varphi$ is unsatisfiable if and only if $\lambda=\false$;
      over arbitrary or transitive frames, $\varphi$ is unsatisfiable if and only if $\lambda=\false$ and $\varphi$ contains no $\Box$-operator.

      Second, if $\lambda=\neg x$ for some state variable $x$, satisfiability of $\varphi$ depends only on the temporal operators between the last 
      $\dna$-operator binding $x$ and $\lambda$. If there are no temporal operators in between, then $\varphi$ is clearly unsatisfiable;
      otherwise, we need to distinguish between frame classes.

      Over arbitrary, transitive and total frames, $\varphi$ is satisfied in the model obtained from the natural
      numbers with the usual order $<$ as transition relation if and only if there is a temporal operator in between.
      Over ER frames, $\varphi$ is satisfiable if and only if $\varphi$ is satisfiable in a complete frame,
      i.e., one were the transition relation contains every possible edge ($R=W\times W$). 
      In such a model, satisfiability of $\varphi$ depends only on the last temporal operator. 
      If this operator is a $\Diamond$, then $\varphi$ is satisfiable in model containing at least two states. 
      If the last temporal operator is a $\Box$, then $\varphi$ is unsatisfiable.

      Therefore, the complexity of satisfiability is in all cases dominated by the translation of the given formula into the form described above, and hence in \ACp{0}{2}.
  \end{proof}

We now turn to the hybrid language including the $\at$-operator.

    \begin{theorem}\label{thm:hl-sat-total-diamond-n-hard}
    $\SaT[\Fclass{F}](\{\Diamond,\dna,\at\}, B)$ is $\L$-hard for $[B] \supseteq \CloneN_2$
    and all considered frame classes $\Fclass{F}\in\{\mathsf{all}, \mathsf{trans}, \mathsf{total}, \mathsf{ER}\}$.
    \end{theorem}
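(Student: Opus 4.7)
The natural approach is to adapt the $\L$-hardness reduction from Theorem~\ref{thm:hl-sat-Dia,Box,dna,at-I}, which reduces the $\L$-complete problem \ORD\ to $\SaT(\{\Diamond,\Box,\dna,\at\},B)$ with $[B]\supseteq\CloneI_0$. That reduction cannot be used verbatim here for two reasons: we no longer have $\Box$ in our operator set, and the original check $\at_t \Box\false$ fails over $\mathsf{total}$ and $\mathsf{ER}$ frames, where every state has a successor. The first issue is easy to overcome, since $\Box\varphi\equiv\neg\Diamond\neg\varphi$ and $\CloneN_2=[\{\neg\}]$ provides negation. The second issue is the main obstacle, and I would solve it by replacing the ``no successor'' test with a state-inequality test using negation.

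Concretely, note that for state variables $x,y$ we have $K,g,w\models \at_x\neg y$ iff $g(x)\neq g(y)$. Introducing a fresh state variable $u_0$ that records the starting state and is never rebound, I would take
\[
  \varphi \;=\; \dna u_0.\, \dna v_1.\,\cdots\,\dna v_n.\, \Diamond\dna s.\, \alpha^n\, \at_t \neg u_0,
\]
where $\alpha^n$ is exactly the propagation chain from the proof of Theorem~\ref{thm:hl-sat-Dia,Box,dna,at-I}: after its execution, $v_i$ is bound to the state of $s$ for each $v_i\geq_S s$ and remains bound to the initial state otherwise. Since $u_0\notin V$, it is untouched by $\alpha^n$. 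Neither $\alpha^n$ nor the final $\at_t\neg u_0$ uses any Boolean connective beyond negation, so $\varphi\in\HL(\{\Diamond,\dna,\at\},\{\neg\})\subseteq\HL(\{\Diamond,\dna,\at\},B)$, and the transformation $(V,S,s,t)\mapsto\varphi$ is clearly computable by a logtime-uniform $\AC^0$-circuit.

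For correctness, if $s\leq_S t$, I would exhibit a two-state Kripke structure on $\{w_0,w_1\}$ with $w_0\neq w_1$, adapted per frame class (the edge $w_0\to w_1$ alone for $\mathsf{all}$ and $\mathsf{trans}$, augmented by $w_1\to w_1$ for $\mathsf{total}$, and taking the complete relation on $\{w_0,w_1\}$ for $\mathsf{ER}$), evaluate at $w_0$, and verify that $t$ is moved to $w_1$ while $u_0$ stays at $w_0$, so $\at_t\neg u_0$ holds. Conversely, in any model of $\varphi$, $u_0$ is bound to the initial state $w$ once and for all; if $s\not\leq_S t$, the Claim from the proof of Theorem~\ref{thm:hl-sat-Dia,Box,dna,at-I} gives $g(t)=w=g(u_0)$ after $\alpha^n$, so $\at_t\neg u_0$ is false and $\varphi$ cannot be satisfied. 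This argument is uniform across the four frame classes, because it never relies on the absence or presence of successors, only on $u_0\ne g(s)$ in the positive instance (which is achievable in each frame class by the two-element structures above) and on $g(t)=g(u_0)$ in the negative instance (which is purely a syntactic consequence of $\alpha^n$). Hence the same reduction establishes $\L$-hardness for each $\Fclass{F}\in\{\mathsf{all},\mathsf{trans},\mathsf{total},\mathsf{ER}\}$.
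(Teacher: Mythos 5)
Your proposal is correct and takes essentially the same route as the paper: a reduction from \ORD\ that reuses the propagation string $\alpha^n$ (and its Claim) from Theorem~\ref{thm:hl-sat-Dia,Box,dna,at-I} and replaces the $\Box\false$ test, which fails over total and ER frames, by a negation test on state variables. The only minor difference is that you add a fresh anchor variable $u_0$ and check $\at_t\neg u_0$, yielding a direct reduction (satisfiable iff $s\leq_S t$) uniform over all four frame classes, whereas the paper checks $\at_s\neg t$ (satisfiable iff $s\not\leq_S t$, i.e., a reduction from the complement of \ORD) for total and ER frames and appeals to Theorem~\ref{thm:hl-sat-Dia,Box,dna,at-I} for the classes $\mathsf{all}$ and $\mathsf{trans}$.
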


    \begin{proof}
    For $\Fclass{F}\in\{\mathsf{all}, \mathsf{trans}\}$ this follows from Theorem~\ref{thm:hl-sat-Dia,Box,dna,at-I}.
    To capture the other frame classes as well, we give a reduction from the problem \emph{Order between Vertices} (\ORD) similar to the one in the proof of Theorem \ref{thm:hl-sat-Dia,Box,dna,at-I}.
    Let $\alpha$ be the string obtained from $(V,S,s,t)$ as in the proof of Theorem~\ref{thm:hl-sat-Dia,Box,dna,at-I}.
    Then
    \[
      \varphi=\dna x_1.\Diamond\dna x_2.\Diamond \dna x_3.\cdots\Diamond\dna x_n. ~ \alpha^n~ \at_{s} \neg t .
    \]
    From $K,w,g\models\varphi$ it follows that $s\not\leq_S t$
    using the claim in the proof of Theorem~\ref{thm:hl-sat-Dia,Box,dna,at-I}.
    If  $s\not\leq_S t$ then every Kripke structure 
    in which $s$ and $t$ can be bound to different states satisfies $\varphi$.
    This proves the correctness of our reduction.
    \end{proof}

We again provide a matching upper bound for all considered frame classes, which yields \LOGSPACE-completeness of the respective satisfiability problems.
    \begin{theorem}\label{thm:hl-sat-total-all-n}
      $\SaT[\Fclass{F}](\{\Diamond,\Box,\dna,\at\}, B)$ is in $\L$ for $[B] \subseteq \CloneN$ and 
      all considered frame classes $\Fclass{F}\in\{\mathsf{all}, \mathsf{trans}, \mathsf{total}, \mathsf{ER}\}$.
    \end{theorem}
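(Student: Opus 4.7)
Our plan is to design a logarithmic-space decision procedure, paralleling the $\ACp02$ procedure of Theorem~\ref{theo:all,trans,total,ER_N_Dia,Box,dna} but now accommodating the $\at$-operator. The starting observation is that every $\HL(\{\Diamond,\Box,\dna,\at\},\CloneN)$-formula is a linear chain of unary operators ending in a single atomic symbol; by applying the equivalences $\neg\Diamond\psi\equiv\Box\neg\psi$, $\neg\Box\psi\equiv\Diamond\neg\psi$, $\neg\dna x.\psi\equiv\dna x.\neg\psi$, $\neg\at_t\psi\equiv\at_t\neg\psi$ and $\neg\neg\psi\equiv\psi$, an equivalent NNF formula has the shape $\varphi'=O_1O_2\cdots O_m\lambda$ with each $O_i\in\{\Diamond,\Box,\dna x.,\at_t\}$ and $\lambda$ either a constant $\true,\false$, an atom $a\in\ATOM$, or its negation. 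This normalization amounts to parity-counting the negations above each symbol and is therefore in $\ACp02\subseteq\L$.

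Satisfiability of $\varphi'$ is then decided by a case split on $\lambda$, each subcase computable in $\L$. The easy cases return SAT immediately: $\lambda\in\{\true,p,\neg p,n\}$ for any proposition $p$ and nominal $n$, since $\eta(p)$ and $\eta(n)$ are free to choose. For $\lambda=\false$ the analysis from Theorem~\ref{theo:all,trans,total,ER_N_Dia,Box,dna} carries over: unsatisfiable over $\mathsf{total}$ and $\mathsf{ER}$ frames, and over $\mathsf{all}$ and $\mathsf{trans}$ frames satisfiable iff some $O_i=\Box$, whose vacuous satisfaction at a sink state is exploited. The subtle cases are $\lambda\in\{\neg n,x,\neg x\}$, where satisfiability asks whether a model exists such that the end-state $w_m$ is (un)equal to a specific reference state. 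Only $\dna$-operators (forcing $w_i=w_{i-1}$) and $\at_t$-operators (forcing $w_i$ to equal the referent of $t$ at that point) impose equality constraints on positions; $\Diamond$ and $\Box$ impose none, and $\Box$ may furthermore be satisfied vacuously over $\mathsf{all}$ and $\mathsf{trans}$ frames. A left-to-right scan of the chain that tracks the position of the last $\dna x.$-binding of each relevant variable, the positions of $\at_t$-operators pertaining to $n$ or $x$, and the presence of a later $\Box$ thus suffices; the equivalence classes arising from the chained bindings can then be tested via an undirected reachability query on a graph of size $O(m)$, which is in $\L$ by Reingold's theorem.

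The main obstacle is tabulating the subcases for $\lambda\in\{\neg n,x,\neg x\}$ across all four frame classes, because the frame condition interacts with both the vacuous-$\Box$ escape (available only for $\mathsf{all}$ and $\mathsf{trans}$) and the reflexivity-induced equalities over $\mathsf{ER}$ frames---for instance, $\Box\neg x$ evaluated at $g(x)$ collapses over $\mathsf{ER}$ to $\neg x$ at $g(x)$ and is unsatisfiable. Once each condition has been reduced to a local structural property of the operator chain plus such reachability queries, the decision runs in $\L$. Combined with the matching lower bound of Theorem~\ref{thm:hl-sat-total-diamond-n-hard}, this will yield $\L$-completeness of $\SaT[\Fclass{F}](\{\Diamond,\Box,\dna,\at\},B)$ for $[B]\subseteq\CloneN$ and every considered frame class $\Fclass{F}$.
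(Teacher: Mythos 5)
Your opening moves match the paper's: rewrite the formula into a chain $O_1\cdots O_m\lambda$ in negation normal form (an $\ACp{0}{2}$, hence logspace, transformation) and branch on the final literal $\lambda$. But the proposal stops exactly where the real work begins, and two of its concrete claims are wrong. First, for $\lambda=\false$ you assert that over $\mathsf{all}$ and $\mathsf{trans}$ frames the analysis of Theorem~\ref{theo:all,trans,total,ER_N_Dia,Box,dna} ``carries over'': satisfiable iff some $O_i=\Box$. That theorem concerns the language \emph{without} $\at$, and the criterion fails once $\at$ is present: $\at_i\Diamond\at_i\Box\false$ contains a $\Box$ but is unsatisfiable over every frame class, because the earlier $\at_i\Diamond$ forces the state named $i$ to have a successor. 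The paper needs a genuinely finer condition here (whether the state at which some $\Box$ is evaluated can be made successor-free given all jumps back to it), expressed via the reduced modality strings of the prefixes $P_1\cdots P_\ell\true$.

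Second, and more seriously, for $\lambda\in\{\neg x,\neg i\}$ you claim that $\Diamond$ and $\Box$ ``impose no equality constraints'' and that a scan recording binding positions, relevant $\at$-positions and ``the presence of a later $\Box$'', plus an undirected-reachability test, suffices. This is precisely the part the paper has to work hard for, and your description would not decide it correctly. Satisfiability in this case hinges on comparing \emph{two} modality sequences --- the one leading to the state where $x$ (or $i$) gets bound and the one leading to the state that must satisfy $\neg x$ --- obtained after cancelling matched $\dna t.\cdots\at_t$ segments, and the answer depends on the interleaving of $\Diamond$ and $\Box$ along these sequences together with their relative positions in the formula (the paper's conditions (1)--(4)). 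For example, $\dna r.\,\Box\,\dna x.\,\at_r\,\Box\,\neg x$ is unsatisfiable over total frames (the inner $\Box$ must avoid every successor of the $r$-state, but $x$ was bound to such a successor by the outer $\Box$), whereas replacing either $\Box$ by $\Diamond$ in the appropriate relative order makes it satisfiable; no bookkeeping of ``last binding position plus a later $\Box$'' distinguishes these. You acknowledge that ``tabulating the subcases'' across the four frame classes remains to be done, but that tabulation, with its correctness proof (necessity and sufficiency of the positional $\Diamond$/$\Box$ conditions, the spypoint reduction eliminating nominals and unbound variables over total frames, and the equivalence-class escape over ER frames), \emph{is} the proof; as it stands the proposal is a plan with an incorrect base case rather than an argument establishing membership in $\L$. (The appeal to Reingold's theorem is also unnecessary: since the formula is a single chain, the equality structure is recovered by the simple cancellation scan the paper uses.)
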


  \begin{proof}
      As in the proof of Theorem \ref{theo:all,trans,total,ER_N_dna}, we can assume that the given formula $\varphi\in\HL(\{\Diamond,\Box,\dna,\at\},\CloneN)$ is  of the form
      \begin{align*}
        \varphi = & \big(\{\Diamond,\Box\}\cup\{\at_t\mid t\in\NOM\cup\SVAR\}\cup\{\dna x. \mid x\in\SVAR\}\big)^\star\lambda,
      \end{align*}
      where $\lambda\in\{\lnot x, x, \lnot p, p, \lnot i, i, \true, \false\}$ for a variable $x\in\SVAR$, 
      a proposition $p\in\PROP$, and a nominal $i\in\NOM$. We consider each frame class separately.

      \par\smallskip\noindent
      \textbf{Over total frames, }%
      to decide whether $\varphi \in \SaT[total](\{\Diamond,\Box,\dna,\at\}, B)$, we distinguish the following three cases.
      \newcommand{\modstring}[1]{\mathord{(#1)_{\vartriangle}}}
      \begin{description}
        \item[$\lambda\in\{\false, \true\}$:] As the transition relation is total, we obtain that $\varphi\in\SaT[total]$ iff $\lambda=\true$.
        
        \item[$\lambda\in\{p,\lnot p,x,i\}$:] In this case, $\varphi$ is satisfied by the singleton reflexive model.
        
        \item[$\lambda\in\{\lnot x,\lnot i\}$:] For the last case that $\lambda$ is a negated nominal or a negated state variable. 
        First consider the case that $\varphi$ is free of unbound state variables and nominals, i.e., $\lambda = \neg x$ for some $x \in \SVAR$.
        We can assume w.l.o.g.\ that each state variable is bound at most once. This can be achieved by bound renaming, 
        which is computable in logarithmic space because it only involves computing scopes of binders and counting occurrences
        of \dna and \at\ operators.

        Let $\overline{\ATOM}:= \{\neg a \mid a  \in \ATOM\}$.
        We define a transformation $\modstring\cdot$ as follows:
        For a formula $\psi=P_1 \cdots P_n a$ with $a \in \ATOM \cup \overline{\ATOM}$ and $P_i \in \{\dna x. ,\at_{t},\Diamond,\Box \mid x \in \SVAR, t \in \NOM \cup \SVAR\}$, let $(\psi_j)_{j \in \N}$ be the family of strings obtained as follows:
            $\psi_0 := \psi$, and 
            $\psi_{j+1}$ is derived from $\psi_j$ 
            by deleting the substring $P_\ell \cdots P_k$, $1 \leq \ell < k \leq n$, from $\psi_j$ where 
            $P_\ell = \dna x.$, $P_k =\at_x$ and
            such that the suffix $P_{k+1}\cdots P_n$ is free of $\at$-operators.
        Note that, in the first step, such an $\ell$ exists by assumption.
        Obviously, there is some $\tilde{\jmath} \leq n$ such that $\psi_{\tilde{\jmath}} = \psi_{\tilde{\jmath}+1}$.
        Intuitively, $\psi_{\tilde{\jmath}}$ is obtained by reading $\psi$ from right to left, 
        discarding all symbols between, and including, any $\at_x$-operator and the corresponding $\dna x.$
        We define $\modstring{\psi}$ as $\psi_{\tilde{\jmath}}$ with all remaining $\dna x.$-operators removed.
        Obviously, $\modstring{\psi} \in \{\Diamond, \Box\}^\star \cdot (\ATOM \cup \overline{\ATOM}) $.
        For example, if $\psi:= \Diamond\Box \dna x.\Diamond\Diamond \dna y. \Box \at_x \Diamond \dna z.\Box \at_z y$, then
        $\modstring{\psi}=\Diamond\Box \Diamond y = P_1P_2P_9y$.

        Let $\varphi=P_1 \cdots P_n \neg x$ denote the given formula and let $1 \leq m \leq n$ be such that $P_m = \dna x$. We define $\varphi'$ as $P_1 \cdots P_m \true$.
        The formulae $\modstring{\varphi}$ and $\modstring{\varphi'}$ describe the sequence of modalities relevant 
        to reach the state(s) that must not be labeled with $x$, 
        resp.\ the sequence of modalities relevant to reach the state that is actually labeled by $x$.
        It remains to check that the sequences described by $\modstring{\varphi}$ and $\modstring{\varphi'}$ are ``compatible''.
        Denote by $\kappa$ the maximal index such that $i_{\kappa} = j_{\kappa}$.
        Then $P_1 \cdots P_{i_\kappa}$ describes the common prefix of both formulae that may be ignored w.\,r.\,t.\ the satisfiability of $\varphi$.
        We therefore omit $P_1 \cdots P_{i_\kappa}$ from both sequences
        $\modstring{\varphi'}$ and $\modstring{\varphi}$, i.e., we set $\modstring{\varphi'}=P_{j_\kappa+1} \cdots P_{j_\nu} \true$
        and $\modstring{\varphi}=P_{i_\kappa+1} \cdots P_{i_\mu} \neg x$ 
        with $1 \leq i_{\kappa+1} < \cdots < i_\mu \leq n$ and $1 \leq j_{\kappa+1} < \cdots < j_\nu < m$.
        It now holds that $\varphi$ is satisfiable if and only if $\modstring{\varphi'}$ and $\modstring{\varphi}$ 
        meet one of the following conditions:
        \begin{enumerate}
          \item \label{total-sat-N-item1} 
          $\mu \neq \nu$, 
          
          \item \label{total-sat-N-item2}
          $\mu=\nu$ and there exists a $\xi$, $\kappa < \xi \leq \mu$,
          such that $P_{i_\xi} = P_{j_\xi} = \Diamond$, 
          
          \item \label{total-sat-N-item3}
          $\mu=\nu$ and there exists a $\xi$, $\kappa < \xi \leq \mu$,
          such that $P_{i_\xi} = \Diamond$, $P_{j_\xi} = \Box$, $j_\xi < i_\xi$ (i.\,e., the position of $P_{i_\xi}$ in $\varphi$ 
          is to the right of $P_{j_\xi}$),
          
          \item \label{total-sat-N-item4}
          $\mu=\nu$ and there exists a $\xi$, $\kappa < \xi \leq \mu$,
          such that $P_{i_\xi} = \Box$, $P_{j_\xi} = \Diamond$, $i_\xi < j_\xi$ (i.\,e., the position of $P_{j_\xi}$ in $\varphi$ 
          is the right of $P_{i_\xi}$).
        \end{enumerate}
        
        \par\smallskip\noindent
        \emph{For necessity of (\ref{total-sat-N-item1})--(\ref{total-sat-N-item4}),}
        let $\varphi$ be an $\HL(\{\Diamond,\Box,\dna,\at\},\CloneN)$-formula without nominals and unbound state variables
        which is satisfiable in a total frame but does not satisfy any of \eqref{total-sat-N-item1}--\eqref{total-sat-N-item4}.
        From the converses of conditions \eqref{total-sat-N-item1} and \eqref{total-sat-N-item2}, 
        we derive that $\mu = \nu$ and that for all $\kappa < \xi \leq \mu$,
        $P_{i_\xi}=\Box$ or $P_{j_\xi}=\Box$. 
        Now, observe that in total structures if $P_{i_\xi}=P_{j_\xi}=\Box$, 
        then $\varphi$ and $\varphi$ with $P_{i_\xi}$ and $P_{j_\xi}$ removed are equisatisfiable.

        Hence we may assume that for all $\kappa < \xi \leq \mu$, either $P_{i_\xi}=\Box$ or $P_{j_\xi}=\Box$, but not both.
        Moreover, the converses of \eqref{total-sat-N-item3} and \eqref{total-sat-N-item4} imply $i_\xi > j_\xi$ for the former and $j_\xi > i_\xi$ for the latter case. But if we evaluate $\varphi$ on some total Kripke structure $K$, then every $P_{i_\xi}=\Diamond$ (resp.\ $P_{j_\xi}=\Diamond$) is processed prior to the corresponding modality $P_{j_\xi}=\Box$ (resp.\ $P_{i_\xi}=\Box$). This is a contradiction to the satisfiability of $\varphi$:
        suppose $P_{i_\xi}=\Diamond$, $P_{j_\xi}=\Box$, and  $i_\xi < j_\xi$. Then $P_{i_\xi}=\Diamond$ enforces a successor $v$
        behind which $x$ is bound, but $P_{j_\xi}=\Box$ causes the subformula starting at $P_{j_\xi}$ to be satisfied in $v$.
        In the evaluation of that subformula, $\neg x$ will then be required to be satisfied in the state where $x$ is bound.

        \par\smallskip\noindent
        \emph{For sufficiency of (\ref{total-sat-N-item1})--(\ref{total-sat-N-item4}),}
        assume that $\varphi$ meets one of the above conditions. It is not hard to see that the satisfaction of condition \eqref{total-sat-N-item1} implies that $\varphi$ is satisfied in a Kripke structure consisting of a single chain of states.
        Otherwise, if one of conditions \eqref{total-sat-N-item2}--\eqref{total-sat-N-item4} is satisfied, then $\varphi$ is satisfied in a Kripke structure consisting of two chains of states that share a common prefix of length $i_\xi-1$.
        
        \par\smallskip\noindent
        For the case that $\varphi$ contains unbound state variables or nominals,
        observe that $\varphi$ can be transformed into an equisatisfiable formula $\psi$ in polynomial time such that $\psi$ is free of nominals and unbound state variables:
        \[
          \psi := \dna r. \Diamond \dna x_1. \at_r
                          \cdots
                          \Diamond \dna x_k. \at_r
                          \Diamond \dna y_1. \at_r
                          \cdots
                          \Diamond \dna y_\ell. \at_r
                          \Diamond \varphi[i_1/y_1,\ldots,i_\ell/y_1],
        \]
        where $r,y_1,\ldots,y_\ell$ are fresh state variables,  
        $x_1,\ldots,x_k$ enumerate the unbound state variables occurring in $\varphi$,
        and $i_1,\ldots,i_\ell$ enumerate the nominals occurring in $\varphi$.
        If $\psi$ is satisfiable, then so is $\varphi$.
        On the other hand, if $\varphi$ is satisfiable in a total structure, then $\psi$ is satisfied in the structure
        obtained by adding a spypoint \cite{blse95}, i.e., a state $w$ that is a successor of every state of the original structure.
      \end{description}

      So far, an algorithm deciding satisfiability of $\varphi$ needs to check which of the three cases for $\lambda$
      holds. In the third case, after freeing $\varphi$ of nominals, unbound state variables, and multiple binding of the same
      state variable, $\modstring{\varphi}$ and $\modstring{\varphi'}$ need to be computed, and
      conditions \eqref{total-sat-N-item1}--\eqref{total-sat-N-item4} need to be checked. 
      In order to see that these tasks can be performed in logarithmic space,
      it needs to keep on mind that $\modstring{\varphi}$ and $\modstring{\varphi'}$ as well as the
      initial transformations need not be computed explicitly; 
      pointers to the current positions in $\varphi$, $\modstring{\varphi}$ and $\modstring{\varphi'}$ 
      suffice to determine the required information on-the-fly.

      \par\smallskip\noindent
      \textbf{Over ER frames, }%
      the above algorithm can be easily adapted.
      In case $\lambda \in \{\false, \true, p, \neg p,x,i\}$ we proceed as above. 
      Hence assume $\lambda \in \{\neg x,\neg i\}$. 
      
      First observe that the operator $\modstring{\cdot}$ has been defined for formulae without nominals and unbound state variables only.
      We extend its definition to the general case by letting $\modstring{\psi}$ be undefined whenever for some $\psi_j$ in the family of formulae needed to define $\modstring{\psi}$ there is no $\dna t.$, $t \in \NOM \cup \SVAR$, such that the suffix starting at $\at_t$ is free of $\at$-operators. 
       
      Now, if $\modstring{\varphi}$ and $\modstring{\varphi'}$ are both defined, then all states that $\varphi$ speaks about are situated in the same strongly connected component. It thus suffices to replace the conditions \eqref{total-sat-N-item1}--\eqref{total-sat-N-item4} imposed on $\modstring{\varphi}$ and $\modstring{\varphi'}$ with the following three.
      $\varphi$ is satisfiable over ER frames if and only if 
      \begin{enumerate}
        \item[(i')] $P_{i_\mu} = P_{j_\nu} = \Diamond$, or 
        \item[(ii')] $P_{i_\mu} = \Diamond$, $P_{j_\nu} = \Box$ and $j_\nu < i_\mu$.
        \item[(iii')] $P_{i_\nu} = \Diamond$, $P_{j_\mu} = \Box$ and $i_\nu < j_\mu$.
      \end{enumerate}
      If otherwise $\modstring{\varphi}$ or $\modstring{\varphi'}$ is undefined, we may assume that the state labeled by the nominal (resp.\ by the state variable) corresponding to $\lambda$ is bound in an equivalence class different from the equivalence class of the state that has to satisfy $\lambda$: assume, e.\,g., that $\modstring{\varphi'}$ is undefined due to $\at_t$, $t \in \NOM$. Then $K,g,w \models \varphi$ for the ER structure $K=(\{w,w'\},\{(w,w),(w',w')\}, \eta)$ with $[\eta,g](t)=w'$. The case that $\modstring{\varphi}$ is undefined is analogous.
      Hence $\varphi$ is satisfiable.
      
      \par\smallskip\noindent
      \textbf{Over arbitrary frames, }%
      it remains to extend the procedure for $\SaT[total]$ to also recognize formulae $\varphi$ being unsatisfiable over total but satisfiable over arbitrary or transitive frames (e.\,g., $\Box \false$). This is the case if and only if $\varphi$ contains a subformula $\Box\psi$ such that none of the states in which $\varphi$ is to be evaluated needs to have a successor (which might be enforced by preceding operators, consider, e.\,g., $\at_i \Diamond \at_i \Box \false$). More formally, for $1 < \ell \leq n$, let
      \[
        \modstring{P_1\cdots P_\ell\true}=P_{i_1}\cdots P_{i_\mu}\true.
      \]
      Then $\varphi$ is satisfiable over arbitrary frames if and only if 
      one of the conditions \eqref{total-sat-N-item1}--\eqref{total-sat-N-item4} is satisfied
      or for some $1 \leq \ell \leq n$, $P_{i_{\mu}} = \Box$ and for all $1 \leq k < \ell$,
        $k \in \{i_1,\ldots,i_\mu\}$ implies 
          that $\modstring{P_1\cdots P_k\true}=P_{j_1}\cdots P_{j_\nu}$ contains less than $\mu$ modalities 
          (i.\,e.\ $\mu > \nu$)
          or there exists a position $\xi$ such that $1 \leq \xi \leq \nu$ such that $P_{i_\xi}=P_{j_\xi}=\Diamond$.
      %

      \par\smallskip\noindent
      \textbf{Over transitive frames, }%
      we can combine the ideas above to an algorithm deciding the satisfiability for $B$-formulae.
  \end{proof}

\subsubsection{All Boolean functions}\label{subsubsec:all}

Finally, let us consider the clones between $\CloneD$, $\CloneS_1$ and $\CloneBF$, the clone of all Boolean functions.
For the classes of all frames, transitive frames, and equivalence relations, we can transfer results obtained for the set $\{\wedge,\vee,\neg\}$ of Boolean functions to these clones using a technical Lemma (see Appendix).
Additionally, we show that we can reduce the satisfiability problem over the class of all frames to the one over the class of total frames, establishing undecidability for all hybrid languages in this case.

    \begin{lemma}
      \label{lem:any_S1,D_Dia,dna}
      $\SaT[\Fclass{F}](O, \{\wedge,\vee,\neg\}) \leqPm \SaT[\Fclass{F}](O,B)$ for \allconsideredF,
      if $[B\cup \{\ONE\}]=\CloneBF$ and $O~\cap~\{\dna,\at\}~\not=~\emptyset$.
    \end{lemma}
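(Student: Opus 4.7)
The plan is to prove the lemma in two stages: first rewrite the Boolean part using $B \cup \{\true\}$, then eliminate the constant $\true$ using a hybrid gadget. This mirrors the two-step strategy already used in the proof of Theorem \ref{thm:hl-sat-Dia,Box,dna,at-m}, adapted to all frame classes simultaneously.

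Given $\varphi \in \HL(O,\{\wedge,\vee,\neg\})$, the first step uses the hypothesis $[B \cup \{\ONE\}] = \CloneBF$: every Boolean function, in particular $\wedge$, $\vee$ and $\neg$, can be expressed as a $B \cup \{\ONE\}$-term. By the standard ``efficient implementation'' results for bases of Post's lattice (\cite{lew79,sch07a}), this rewriting can be carried out in polynomial time so that the resulting formula $\varphi_1 \in \HL(O, B \cup \{\ONE\})$ has polynomial size in $|\varphi|$ and is logically equivalent to $\varphi$. Modal and hybrid operators are untouched, so equivalence lifts from the propositional setting to $\HL$ unchanged.

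The second step exploits the assumption $O \cap \{\dna,\at\} \neq \emptyset$ to erase the constant $\ONE$. Pick a state variable $z \in \SVAR$ (resp.\ a nominal $j \in \NOM$) not occurring in $\varphi_1$ and define
\[
  \tau \;:=\; \begin{cases} \dna z.z & \text{if } \dna \in O,\\ \at_j j & \text{otherwise, i.e., } \at \in O.\end{cases}
\]
In any Kripke structure $K=(W,R,\eta)$, any assignment $g$ and any state $w$, we have $K,g,w \models \dna z.z$ because $g^z_w(z)=w$, and $K,g,w \models \at_j j$ because $\eta(j)$ is the unique state at which $j$ holds. Thus $\tau$ is a valid formula in every frame, independently of the frame class $\Fclass{F}$. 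Let $\varphi'$ be obtained from $\varphi_1$ by replacing every occurrence of $\ONE$ by $\tau$; clearly $\varphi' \in \HL(O,B)$, $|\varphi'|$ is polynomial in $|\varphi|$, and $\varphi'$ is equivalent to $\varphi_1$, hence equisatisfiable with $\varphi$ over $\Fclass{F}$. Since this construction is uniform in $\Fclass{F}$, the reduction $\SaT[\Fclass{F}](O,\{\wedge,\vee,\neg\}) \leqPm \SaT[\Fclass{F}](O,B)$ holds for each of the four frame classes.

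The only subtlety, and hence the main obstacle to verify, is the polynomial-time computability of the first step: a naive recursive rewriting of nested connectives could blow up exponentially. This is handled exactly as in the proofs that already appeal to \cite{lew79,sch07a} in the present paper (e.g.\ Theorem \ref{thm:hl-sat-Dia,Box,dna,at-m}), where the standard ``shared subformula'' encoding keeps the size polynomial. Beyond this, it is worth noting why the hypothesis $O \cap \{\dna,\at\} \neq \emptyset$ is genuinely needed: without either binder or jump, there is no short \HL-formula that is forced to be true in every state of every frame, and the constant $\ONE$ cannot be eliminated.
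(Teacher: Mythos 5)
Your proposal is correct and follows essentially the same route as the paper's own proof: rewrite $\varphi$ over $B\cup\{\ONE\}$ in polynomial time via \cite{lew79,sch07a}, then replace $\ONE$ by $\dna x.x$ or $\at_x x$ (you use a nominal for the $\at$-case, the paper a state variable --- an immaterial difference), concluding equisatisfiability over each frame class. Your added remarks on avoiding exponential blowup and on the role of the hypothesis $O\cap\{\dna,\at\}\neq\emptyset$ are sensible but not a departure from the paper's argument.
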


    \begin{proof}
      Take $\varphi \in \HL(O, \{\wedge,\vee,\neg\})$. 
      Since $[B \cup \{\ONE\}] = [\{\wedge,\vee,\neg\}] = \CloneBF$,
      we can rewrite $\varphi$ as an $\HL(O, B \cup \{\ONE\})$-formula $\varphi'$,
      leaving modal and hybrid operators untouched.
      Due to \cite{lew79,sch07a}, this can be done in polynomial time.
      Now we can easily transform $\varphi'$ into an $\HL(O, B)$-formula $\varphi''$,
      replacing all occurrences of \ONE with $\dna x.x$ or $\at_x x$.
      Clearly, $\varphi$ and $\varphi''$ are equisatisfiable over $\Fclass{F}$.
    \end{proof}

    \begin{lemma}
      \label{lem:total_BF_Dia,dna}
        $\SaT(O, \{\wedge,\vee,\neg\}) \leqPm \SaT[total](O, \{\wedge,\vee,\neg\})$,
        for every set of operators $O\subseteq \{\Diamond,\Box,\dna,\at\}$.
    \end{lemma}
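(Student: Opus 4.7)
The plan is to employ a \emph{spypoint} construction, which converts an arbitrary-frame model into a total one by adjoining a single dummy dead-end state. Let $p$ be a fresh atomic proposition and define a translation $\tau$ on $\HL(O,\{\wedge,\vee,\neg\})$ that is the identity on atoms, commutes with the Boolean operators and with $\at_t$ and $\dna x.$, and satisfies $\tau(\Diamond\psi)=\Diamond(\neg p\wedge\tau(\psi))$ and $\tau(\Box\psi)=\Box(p\vee\tau(\psi))$. The reduction outputs
\[
  f(\varphi) \;=\; \neg p \;\wedge\; \tau(\varphi) \;\wedge\; \bigwedge_{i\in\NOM(\varphi)} \at_i\neg p,
\]
where the last conjunct is included only if $\at\in O$. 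Clearly, $f$ runs in linear time and maps $\HL(O,\{\wedge,\vee,\neg\})$ to itself.

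For the forward direction, from $K=(W,R,\eta),g,w\models\varphi$ I would construct $K'=(W',R',\eta')$ with $W'=W\cup\{d\}$ for a fresh $d$, $R'=R\cup\{(v,d)\mid v\in W',\; v\text{ has no }R\text{-successor}\}$ (so that in particular $(d,d)\in R'$), $\eta'(p)=\{d\}$, and $\eta'$ agreeing with $\eta$ on all remaining atoms. Then $K'$ is total and, for every $v\in W$, the non-$d$ $R'$-successors of $v$ coincide with its $R$-successors. A routine induction on subformulae $\psi$ of $\varphi$ yields $K,h,v\models\psi$ iff $K',h,v\models\tau(\psi)$ for all $v\in W$ and all assignments $h\colon\SVAR\to W$: the cases of $\Diamond$ and $\Box$ rest on the successor correspondence just noted, while $\at_t$ and $\dna x.$ push through since the targeted states always lie in $W$. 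Instantiating at $v=w$ and $h=g$ yields $K',g,w\models f(\varphi)$.

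For the backward direction, starting from $K',g,w\models f(\varphi)$ over a total frame, I would take the substructure induced on the non-$p$ states: $W=W'\setminus\eta'(p)$, $R=R'\cap(W\times W)$, and $\eta$ the restriction of $\eta'$ to $W$. The conjuncts $\at_i\neg p$ ensure that every nominal occurring in $\varphi$ is interpreted in $W$, and the leading $\neg p$ ensures $w\in W$. The symmetric induction uses in the $\Box$-case that the $p$-labelled $R'$-successors dropped when passing to $R$ automatically satisfy the $p$-disjunct of $\tau(\Box\psi)$, so both sides quantify over matching sets of successors. Hence $K,g,w\models\varphi$ over an arbitrary frame, establishing equisatisfiability.

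The main obstacle is the sub-case $\at\notin O$, in which the conjuncts pinning nominals away from $d$ cannot be written down inside the target fragment. I would address this by a preprocessing step that renames every nominal occurring in $\varphi$ and every free state variable by a fresh atomic proposition: because $\at$ is unavailable, no surviving operator can exploit the singleton property of a nominal in a way distinguishing it from an arbitrary subset, so a careful case analysis shows that the substitution is satisfiability-preserving. After this preprocessing the main construction goes through unchanged. I expect this bookkeeping to be the most delicate part; the spypoint idea itself is completely standard.
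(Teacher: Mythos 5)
Your core construction is the same spypoint/relativisation idea as the paper's proof: the paper marks the original states with a fresh proposition $p$ and relativises $\Diamond$ (and $\at$) to $p$, while you mark the added dead-end $d$ with $p$ and relativise $\Diamond$ and $\Box$ to $\neg p$; your forward direction is fine. The genuine gap is in your backward direction. You pass to the substructure induced on the $\neg p$-states, but the satisfying assignment $g$ for $f(\varphi)$ may map a state variable that occurs \emph{free} in $\varphi$ to a $p$-state; your conjuncts $\at_i\neg p$ pin down only nominals. Then your induction claim (for assignments into $W=W'\setminus\eta'(p)$) cannot be instantiated at $g$, and redefining $g$ on such a variable $x$ changes the truth of the atom $x$ and of subformulae $\at_x\psi$, so the argument as written does not establish equisatisfiability. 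The paper sidesteps exactly this by keeping $\{w\}\cup\bigcup_{i\in\NOM}\eta(i)\cup\bigcup_{x\in\SVAR}\{g(x)\}$ inside the restricted structure instead of forcing named states away from the dummies. Your construction is repairable: either also add $\at_x\neg p$ for every free variable $x$ of $\varphi$ (possible only when $\at\in O$), or---cleaner---keep \emph{all} states and delete only the edges entering $\eta'(p)$; with that surgery the induction ($\Diamond$ and $\Box$ see exactly the $\neg p$-successors, $\at$ and assignments are untouched) goes through for the original $g$, and none of the extra conjuncts is needed.

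The treatment of the case $\at\notin O$ is not just delicate but incorrect: replacing nominals (or free state variables) by fresh atomic propositions is \emph{not} satisfiability-preserving even in the absence of $\at$. For example, $\Diamond(i\wedge q)\wedge\Diamond(i\wedge\neg q)$ uses only $\Diamond,\wedge,\neg$, is unsatisfiable because $\eta(i)$ is a singleton, yet becomes satisfiable once $i$ is renamed to a proposition; hence your composed map would send an unsatisfiable formula to one satisfiable over total frames. The same failure occurs with a free state variable in place of $i$. Note that this whole case split evaporates once the backward direction is done by edge deletion (or by the paper's device of retaining the named states), since then the reduction needs no conjunct mentioning $\at$ at all.
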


    \begin{proof}
      Let $B$ denote the set $\{\wedge,\vee,\neg\}$.
      We recursively define the reduction function
      $(\cdot)^r : \HL(O, B) \to \HL(O, B)$ as follows,
      \begin{xalignat*}{2}
        a^r                                      & = a,\quad a \in \ATOM                    & (\at_t\varphi)^r      & = \at_t(p \wedge \varphi^r)      \\
        \big(c(\varphi_1,\dots,\varphi_n)\big)^r & = c(\varphi_1^r, \dots, \varphi_n^r)     & (\dna x.\varphi)^r    & = \dna x.\varphi^r               \\
        (\Diamond\varphi)^r                       & = \Diamond(p \wedge \varphi^r)
      \end{xalignat*}
      where $p$ is a fresh atomic proposition and $c$ refers to an arbitrary $n$-ary Boolean operator.
      We show that, for any $\varphi \in \HL(O, B)$,
      $\varphi \in \SaT(O, B)$ if and only if $\varphi^r \in \SaT[total](O, B)$.

      For the ``$\Rightarrow$'' direction, assume $K,g,w \models \varphi$ for $K=(W,R,\eta)$.
      From $K$, we construct a total Kripke structure $K'=(W',R',\eta')$ with
      $W' = W \cup \{\tilde{w} \mid w \in W\}$,
      $R' = R \cup \{(w,\tilde{w}),(\tilde{w},\tilde{w}) \mid w \in W\}$,
      $\eta'(p) = W$, and
      $\eta'(x) = \eta(x)$ for all other atomic propositions and nominals $x$.
      It is straightforward to show inductively that $K',g,w \models \varphi^r$.

      For the ``$\Leftarrow$'' direction, assume that $K,g,w \models \varphi^r$ for a total Kripke structure $K=(W,R,\eta)$.
      From $K$, we construct a Kripke structure $K'=(W',R',\eta')$ with
      $W' = \{w\} \cup \eta(p) \cup \bigcup_{i \in \NOM} \eta(i) \cup \bigcup_{v \in \SVAR} \{g(v)\}$,
      $R' = R \cap W'$,
      $\eta'(x) = \eta(x) \cap W'$ for all other atomic propositions and nominals $x$.
      It is straightforward to show inductively that $K',g,w \models \varphi$.
    \end{proof}

    Lemmata \ref{lem:any_S1,D_Dia,dna} and \ref{lem:total_BF_Dia,dna},
    together with Theorem \ref{theo:known_for_and_or_neg},
    yield the following theorem.

    \begin{theorem}
      \label{theo:any_S1,D_Dia,dna}
      Let 
      $[B\cup\{1\}]=\CloneBF$.
      Then:
      \vspace{-2mm}
      \begin{Enum}
        \item
          $\SaT(O, B)$ and $\SaT[total](O, B)$ are \coRE-complete, for any $O\supseteq \{\Diamond,\dna\}$.
        \item
          $\SaT[trans](\{\Diamond,\dna,\at\}, B)$ is \coRE-complete.
        \item
          $\SaT[trans](\{\Diamond,\dna\}, B)$ is \NEXP-complete.
        \item
          $\SaT[ER](\{\Diamond,\dna\}, B)$ and
          $\SaT[ER](\{\Diamond,\dna,\at\}, B)$ are \NEXP-complete.
      \end{Enum}
    \end{theorem}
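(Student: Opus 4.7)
The plan is to derive all four items by chaining Theorem \ref{theo:known_for_and_or_neg} with the two preceding reduction lemmata. The hypothesis $[B \cup \{1\}] = \CloneBF$ is exactly the precondition of Lemma \ref{lem:any_S1,D_Dia,dna}; since $\dna \in O$ throughout the statement, the auxiliary requirement $O \cap \{\dna, \at\} \neq \emptyset$ also holds, so Lemma \ref{lem:any_S1,D_Dia,dna} uniformly transfers $\leqPm$-hardness from the $\{\wedge,\vee,\neg\}$-version of each problem to the $B$-version over the same frame class.

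The lower bounds then follow item by item. For item~1 over arbitrary frames, I would compose Theorem \ref{theo:known_for_and_or_neg}(1) with Lemma \ref{lem:any_S1,D_Dia,dna}. For item~1 over total frames, I would first apply Lemma \ref{lem:total_BF_Dia,dna} to lift the \coRE-hardness of $\SaT(\{\Diamond,\dna\},\{\wedge,\vee,\neg\})$ to \coRE-hardness of $\SaT[total](\{\Diamond,\dna\},\{\wedge,\vee,\neg\})$, and then apply Lemma \ref{lem:any_S1,D_Dia,dna} to switch the Boolean basis to $B$. Items~2, 3, and~4 are analogous compositions of Lemma \ref{lem:any_S1,D_Dia,dna} with parts~(3), (2), and~(4)/(5) of Theorem \ref{theo:known_for_and_or_neg}, respectively. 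When $O$ properly contains $\{\Diamond,\dna\}$, hardness for the smaller operator set immediately yields hardness for the larger one.

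For the matching upper bounds, I would observe that any $\HL(O,B)$-formula expands mechanically into an equisatisfiable $\HL(O,\{\wedge,\vee,\neg\})$-formula with only constant-factor blowup, by replacing each connector $c \in B$ with a fixed $\{\wedge,\vee,\neg\}$-representation; hence the \coRE bound for items~1--2 and the \NEXP bound for items~3--4 transfer directly from Theorem \ref{theo:known_for_and_or_neg}. The one genuine subtlety is that Theorem \ref{theo:known_for_and_or_neg} does not explicitly list a \coRE upper bound for $\SaT[total]$; this, however, is inherited from the standard hybrid-logic axiomatisation/tableau machinery for total frames, which semi-decides unsatisfiability by enumerating proofs of $\neg\varphi$. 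Apart from this minor bookkeeping, the proof is a straightforward concatenation of the cited results, and I do not expect any deeper obstacle.
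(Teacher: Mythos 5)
Your lower-bound argument is exactly the paper's proof: the paper establishes this theorem with the single remark that Lemmata \ref{lem:any_S1,D_Dia,dna} and \ref{lem:total_BF_Dia,dna} together with Theorem \ref{theo:known_for_and_or_neg} yield it, which amounts to precisely the compositions you describe (including routing the total-frame case of item~1 through Lemma \ref{lem:total_BF_Dia,dna} first and then Lemma \ref{lem:any_S1,D_Dia,dna}); the upper bounds are left implicit in the paper, and your observation that the \coRE upper bound over total frames needs the standard semi-decision procedure for unsatisfiability is a fair piece of bookkeeping.

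The one step of yours that does not hold as stated is the claim that an $\HL(O,B)$-formula expands into an equisatisfiable $\HL(O,\{\wedge,\vee,\neg\})$-formula with only constant-factor blowup. Rewriting each $c \in B$ by a fixed $\{\wedge,\vee,\neg\}$-representation in general repeats arguments (this is unavoidable, e.g., for the ternary self-dual base function of $\CloneD$, or for $\oplus$ if it belongs to $B$), so nested occurrences duplicate subformulae and the translation can be exponential in the formula size; the polynomial-time basis changes cited in the paper (\cite{lew79,sch07a}) go in the opposite direction, from $\{\wedge,\vee,\neg\}$ into $B\cup\{\true\}$, which is why Lemma \ref{lem:any_S1,D_Dia,dna} works. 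The exponential blowup is harmless for the \coRE upper bounds of items~1--2, but it does not yield the \NEXP upper bounds of items~3--4: running a \NEXP procedure on an exponentially larger input is not a \NEXP computation. To close this, argue instead that the \NEXP memberships behind Theorem \ref{theo:known_for_and_or_neg}(2),(4),(5) rest on a bounded-model property plus model checking, both of which are insensitive to which Boolean connectives occur (a $B$-connective is simply evaluated in the guessed model), or rebalance the purely Boolean blocks between modal/hybrid operators before translating so that the size increase stays polynomial.
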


  \section{Conclusions}\label{sec:conclusions}

    We have almost completely classified the complexity of hybrid binder logics over four frame classes
    with respect to all possible combinations of Boolean operators, see Figure \ref{fig:hlsat}.
    The main open question is for tight upper bounds for the monotone fragments including the $\Box$-operator over the classes of all and of transitive frames.
    
    Another open questions concerns the hybrid languages with $\Box$ but without $\at$ over the class of transitive frames. The complexity for the respective satisfiability problems based on $\CloneV$, $\CloneE$, and $\CloneM$ is open; in the case of $\CloneV$ even for the class of all frames. For $\CloneI$, containment in $\AC0$ follows from an analysis of the proof of Theorem \ref{theo:all,trans,total,ER_N_Dia,Box,dna}.
    Finally, we could not obtain any bounds on the complexity for problems based on $\CloneL$, besides \LOGSPACE-hardness inherited from Theorem \ref{thm:hl-sat-Dia,Box,dna,at-I}.

    We are currently investigating the same problems over frame classes important for representing
    modal properties, such as transitive trees, linear frames and the natural numbers.
    Here, satisfiability for \dna, \at, and arbitrary Boolean operators is already decidable,
    but with a nonelementary lower bound; hence, a complexity analysis is worthwile as well.
    Because each such frame is acyclic, the fact that certain formulae are always satisfied in
    the singleton reflexive frame is not helpful any longer. This makes obtaining upper bounds more difficult.
    On the other hand, we can also express the constant \false by $\dna x.\Diamond x$, which reduces
    the sets of Boolean operators to consider. We plan to publish these results in ``Part II''.

  \bibliographystyle{plain}
  \bibliography{thi-hannover}

\begin{thebibliography}{10}

\bibitem{ArBM99b}
C.~Areces, P.~Blackburn, and M.~Marx.
\newblock A road-map on complexity for hybrid logics.
\newblock In {\em Proc.\ CSL-99}, volume 1683 of {\em LNCS}, pages 307--321,
  1999.

\bibitem{arblma00}
C.~Areces, P.~Blackburn, and M.~Marx.
\newblock The computational complexity of hybrid temporal logics.
\newblock {\em Logic Journal of the IGPL}, 8(5):653--679, 2000.

\bibitem{ACKZ07}
A.~Artale, D.~Calvanese, R.~Kontchakov, and M.~Zakharyaschev.
\newblock {DL-Lite} in the light of first-order logic.
\newblock In {\em Proc.\ AAAI-07}, pages 361--366, 2007.

\bibitem{Baa03}
F.~Baader.
\newblock Terminological cycles in a description logic with existential
  restrictions.
\newblock In {\em Proc.\ IJCAI-03}, pages 325--330, 2003.

\bibitem{BaBL05}
F.~Baader, S.~Brandt, and C.~Lutz.
\newblock Pushing the {$\mathcal{EL}$} envelope.
\newblock In {\em Proc.\ IJCAI-05}, pages 364--369, 2005.

\bibitem{bhss05b}
M.~Bauland, E.~Hemaspaandra, H.~Schnoor, and I.~Schnoor.
\newblock Generalized modal satisfiability.
\newblock In {\em Proc.\ STACS}, pages 500--511, 2006.

\bibitem{bamuscscscvo07}
M.~Bauland, M.~Mundhenk, T.~Schneider, H.~Schnoor, I.~Schnoor, and H.~Vollmer.
\newblock The tractability of model checking for {LTL}: the good, the bad, and
  the ugly fragments.
\newblock In {\em Proc.\ M4M-5}, pages 125--140. ENS Cachan, 2007.

\bibitem{bsssv07}
M.~Bauland, T.~Schneider, H.~Schnoor, I.~Schnoor, and H.~Vollmer.
\newblock The complexity of generalized satisfiability for linear temporal
  logic.
\newblock In {\em Proc.\ FoSSaCS}, volume 4423 of {\em LNCS}, pages 48--62,
  2007.
\newblock Accepted for Log.\ Meth.\ in Comp.\ Science.

\bibitem{bla00a}
P.~Blackburn.
\newblock Representation, reasoning, and relational structures: a hybrid logic
  manifesto.
\newblock {\em Logic Journal of the IGPL}, 8(3), 2000.

\bibitem{blse95}
P.~Blackburn and J.~Seligman.
\newblock Hybrid languages.
\newblock {\em JoLLI}, 4:41--62, 1995.

\bibitem{bocrrevo03}
E.~B{\"o}hler, N.~Creignou, S.~Reith, and H.~Vollmer.
\newblock Playing with {B}oolean blocks, part {I}: {P}ost's lattice with
  applications to complexity theory.
\newblock {\em ACM-SIGACT Newsletter}, 34(4):38--52, 2003.

\bibitem{BozzelliL08}
L.~Bozzelli and R.~Lanotte.
\newblock Complexity and succinctness issues for linear-time hybrid logics.
\newblock In {\em Proc. of 11th JELIA}, volume 5293 of {\em LNCS}, pages
  48--61, 2008.

\bibitem{CdGL+05}
D.~Calvanese, G.~{De Giacomo}, D.~Lembo, M.~Lenzerini, and R.~Rosati.
\newblock {DL-Lite}: Tractable description logics for ontologies.
\newblock In {\em Proc.\ AAAI-05}, pages 602--607, 2005.

\bibitem{dhlnns92}
F.~Donini, B.~Hollunder, M.~Lenzerini, D.~Nardi, W.~Nutt, and A.~Spaccamela.
\newblock The complexity of existential quantification in concept languages.
\newblock {\em Artificial Intelligence}, 53(2-3):309--327, 1992.

\bibitem{etessami:1997}
K.~Etessami.
\newblock Counting quantifiers, successor relations, and logarithmic space.
\newblock {\em Journal of Computer and System Sciences}, 54:400--411, 1997.

\bibitem{frrisc03}
M.~Franceschet, M.~{de Rijke}, and B.~Schlingloff.
\newblock Hybrid logics on linear structures: Expressivity and complexity.
\newblock In {\em Proc.\ 10th TIME}, pages 166--173, 2003.

\bibitem{gor96}
V.~Goranko.
\newblock Hierarchies of modal and temporal logics with reference pointers.
\newblock {\em Journal of Logic, Language and Information}, 5(1):1--24, 1996.

\bibitem{hem01}
E.~Hemaspaandra.
\newblock The complexity of poor man's logic.
\newblock {\em Journal of Logic and Computation}, 11(4):609--622, 2001.
\newblock Corrected version available at {arXiv} (2005).

\bibitem{lad77}
R.~Ladner.
\newblock The computational complexity of provability in systems of modal
  propositional logic.
\newblock {\em SIAM Journal on Computing}, 6(3):467--480, 1977.

\bibitem{lew79}
H.~Lewis.
\newblock Satisfiability problems for propositional calculi.
\newblock {\em Math.\ Sys.\ Theory}, 13:45--53, 1979.

\bibitem{musc07}
M.~Mundhenk and T.~Schneider.
\newblock The complexity of hybrid logics over equivalence relations.
\newblock In {\em Proc.\ HyLo}, pages 81--90, 2007.
\newblock Accepted for a special issue of the J. of Logic, Language and
  Information.

\bibitem{mssw05}
M.~Mundhenk, T.~Schneider, T.~Schwentick, and V.~Weber.
\newblock Complexity of hybrid logics over transitive frames.
\newblock In {\em Proc.\ M4M-4}, Informatik-Berichte, pages 62--78.
  Humboldt-Universit{\"a}t zu Berlin, 2005.

\bibitem{pap94}
C.~H. Papadimitriou.
\newblock {\em Computational Complexity}.
\newblock Addison-Wesley, 1994.

\bibitem{pip97b}
N.~Pippenger.
\newblock {\em Theories of Computability}.
\newblock Cambridge University Press, 1997.

\bibitem{pos41}
E.~Post.
\newblock The two-valued iterative systems of mathematical logic.
\newblock {\em Annals of Mathematical Studies}, 5:1--122, 1941.

\bibitem{revo97}
K.~Regan and H.~Vollmer.
\newblock Gap-languages and log-time complexity classes.
\newblock {\em Theoretical Computer Science}, 188:101--116, 1997.

\bibitem{sch78}
T.~J. Schaefer.
\newblock The complexity of satisfiability problems.
\newblock In {\em Proc.\ STOC}, pages 216--226. ACM Press, 1978.

\bibitem{sch07b}
T.~Schneider.
\newblock {\em The Complexity of Hybrid Logics over Restricted Classes of
  Frames}.
\newblock PhD thesis, Univ.\ of Jena, 2007.

\bibitem{sch07a}
H.~Schnoor.
\newblock {\em Algebraic Techniques for Satisfiability Problems}.
\newblock PhD thesis, Univ.\ of Hannover, 2007.

\bibitem{SchwentickW07}
T.~Schwentick and V.~Weber.
\newblock Bounded-variable fragments of hybrid logics.
\newblock In {\em Proc.\ 24th STACS}, volume 4393 of {\em LNCS}, pages
  561--572. Springer, 2007.

\bibitem{cafr05}
B.~{ten Cate} and M.~Franceschet.
\newblock On the complexity of hybrid logics with binders.
\newblock In {\em Proc.\ 19th CSL, 2005}, volume 3634 of {\em LNCS}, pages
  339--354. Springer, 2005.

\bibitem{vol99}
H.~Vollmer.
\newblock {\em Introduction to Circuit Complexity}.
\newblock Springer, 1999.

\bibitem{Weber07}
V.~Weber.
\newblock Hybrid branching-time logics.
\newblock In {\em Proc. of HyLo}, pages 51--60, 2007.
\newblock Accepted for a special issue of the J. of Logic, Language and
  Information.

\end{thebibliography}

\end{document}